\def\be{\begin{equation}}
\def\ee{\end{equation}}
\patchcmd{\@caption}{\raggedright}{}{}{}
\newcommand{\tr}[1]{\mathrm{tr} \left( #1 \right)}
\renewcommand{\ket}[1]{| #1 \rangle}
\newcommand{\kett}[1]{| \,#1\, \rangle\kern-0.2em\rangle}
\newcommand{\braa}[1]{\langle\kern-0.2em\langle \,#1\, |}
\renewcommand{\eqref}[1]{\textsc{eq.}~(\ref{#1})}
\newcommand{\secref}[1]{\S\,\ref{#1}}
\newcommand{\partl}[3]{\frac{\partial^{#3}#1}{\partial#2^{#3}}}
\newcommand{\afchem}{Department of Chemistry, University of California, Berkeley, California 94720, USA}
\newcommand{\afbqic}{Berkeley Center for Quantum Information and Computation, Berkeley, California 94720, USA}
\newcommand{\inria}{Laboratoire de Physique de l’\'{E}cole Normale Sup\'{e}rieure Inria, ENS, Mines ParisTech, Universit\'{e} PSL, Sorbonne Universit\'{e} Paris France}
\newcommand{\afphys}{Department of Physics, University of California, Berkeley, California 94720, USA}
\begin{document}

\title{Optimal schedule of multi-channel quantum Zeno dragging with application to solving the $k$-SAT problem}

\author{Yipei Zhang}
\email{zyp123@berkeley.edu}
\affiliation{\afbqic}
\affiliation{\afchem}
\author{Alain Sarlette}
\email{alain.sarlette@inria.fr}
\affiliation{\inria}
\author{Philippe Lewalle}
\affiliation{\afbqic}
\affiliation{\afchem}
\email{plewalle@berkeley.edu}
\author{Tathagata Karmakar}
\email{tatha@berkeley.edu}
\affiliation{\afbqic}
\affiliation{\afchem}
\author{Thilo Scharnhorst}
\email{thilo@berkeley.edu}
\affiliation{\afbqic}
\affiliation{\afchem}
\affiliation{\afphys}
\author{K. Birgitta Whaley}
\email{whaley@berkeley.edu}
\affiliation{\afbqic}
\affiliation{\afchem}

\begin{abstract}
Quantum Zeno dragging enables the preparation of common eigenstates of a set of observables by frequent measurement and adiabatic-like modulation of the measurement basis. In this work, we present a deeper analysis of multi-channel Zeno dragging using generalized measurements, i.e.~simultaneously measuring a set of non-commuting observables that vary slowly in time, to drag the state towards a target subspace. For concreteness, we will focus on a measurement--driven approach to solving $k$-SAT problems as examples. 
We first compute some analytical upper bounds on the convergence time, including the effect of finite measurement time resolution. We then apply optimal control theory to obtain the optimal dragging schedule that lower bounds the convergence time, for low-dimensional settings. This study provides a theoretical foundation for multi-channel Zeno dragging and its optimization, and also serves as a guide for designing optimal dragging schedules for quantum information tasks including measurement-driven quantum algorithms.
\end{abstract}

\maketitle

\section{Introduction}
Techniques of quantum control lie at the heart of the development of quantum information processing  technologies over the last three decades, enabling precise manipulation of quantum systems toward tasks such as computation, simulation, as well as sensing and metrology \cite{quantum_control_review1, DAllessandroBook, PMP_tutorial, QSL_review, Koch_2016}. On the one hand, one of the goals of quantum control is to mitigate the effect of unwanted noise sources stemming from coupling to the environment; on the other hand, dissipative quantum control, which essentially engineers the system-environment interaction, has emerged as a novel tool for control and verification of quantum technologies \cite{harrington2022engineered}. 

Among these dissipative quantum control strategies, the quantum Zeno effect (QZE) \cite{misra1977zeno}, in which frequent measurements freeze the Hamiltonian evolution, stands out as a powerful method to stabilize quantum states, subspaces, and dynamics. This has been applied to manipulation of quantum dynamics \cite{Gambetta_2008, Ruskov_2006,raimond2010phase, raimond2012quantum, signoles2014confined, SHG_prl}, creating entangled states and entangling gates \cite{verstraete2009quantum,  Comb_comp_meas_based_stab_Liu,   Scalable_Dissipative_Preparation_Reiter, reiter2017dissipative, PhysRevA_LeoZhou, Doucet_2020, Brown_2022, doucet2023scalable, blumenthal2022demonstration, Lewalle2023multiqubitquantum, lewalle2024prxq}, logical gates on bosonic codes \cite{mirrahimi2014dynamically, touzard2018coherent, Guillaud_2019, gautier2023designing}, and to create or support quantum algorithms \cite{benjamin2017measurement, BZF_PRA, 8104079, cubitt2023dissipative, zhang2024_ksat, Berwald2024groverspeedupfrom, Berwald_2025, PhysRevA.66.032314}. The concept of QZE is 
closely related to Zeno dragging, where the state is pulled along as the measurement basis changes over time \cite{Aharonov_1980, raimond2010phase, SHG_prl, Guillaud_2019, lewalle2024prxq, zhang2024_ksat, BZF_PRA, benjamin2017measurement, Shea_2024}. 
Zeno dragging has a parallel in the approach of adiabatic quantum computation and adiabatic control, where state transfer is achieved through slow modulation of a system's Hamiltonian $\hat{H}(\theta)$ through some control parameter $\theta(t)$  \cite{RevModPhys_adiabatic, Guery_STA-RMP, 10.1063/1.2995837, RevModPhys.annealing}. In contrast, for Zeno dragging the dynamics are generated by some quantum measurement channel $\mathcal{T}(\theta)$ that may be considered as a particular instance of an adiabatic superoperator \cite{Sarandy_2005, Vacanti_2014, Campos_2016}. This Zeno dragging framework is particularly attractive for preparing ground states of frustration-free Hamiltonians, where commonly each term is local and can be measured separately.

In algorithmic applications of the Zeno effect, and of Zeno dragging in particular, measurement-driven algorithms based on the projective QZE have been proposed to 
solve Boolean satisfiability ($k$-SAT) problems and ground state preparation problems of frustration--free Hamiltonians \cite{benjamin2017measurement, BZF_PRA}. The former has recently been further explored by several of the current authors using the more general quantum Zeno setting provided by generalized measurements \cite{zhang2024_ksat}. Empirical results in \cite{zhang2024_ksat} suggest that the limit of weak continuous measurement provides optimal algorithmic performance for a quantum Zeno driven $k$-SAT computation, but rigorous bounds on the effect of measurement strength on algorithmic performance have not yet been established.

In quantum annealing or adiabatic control and computation problems, it is well known that optimizing the schedule according to which the Hamiltonian is modulated over time is important in the overall performance \cite{Roland_Cerf, Brif_2014, genetic, dalgaard2020global, PhysRevA.102.012614, chen2022optimizing}. 
Similarly, for Zeno dragging we can expect that optimizing the schedule of the control parameter $\theta(t)$ which determines the measurement basis as a function of time could also play a significant role in obtaining computationally efficient measurement-driven protocols and algorithms. 
Indeed, \cite{benjamin2017measurement} already empirically proposed a nonlinear schedule that would perform much better than a linear schedule as a function of time. More recently, in \cite{lewalle2024prxq} several of the current authors have provided a theoretical framework based on optimal control that  enables schedule optimization under Zeno dragging with the target of generating a known target state. 

This method, based on interpreting the action functional from the Chantasri Dressel Jordan (CDJ) \cite{CDJ_origin, CJ} stochastic path integral as a cost function for Pontryagin--style optimal control, has been applied to small--scale systems using a single measurement channel \cite{kokaew2024, lewalle2024prxq, karmakar2025cdj}. We refer to this approach as the CDJ--Pontryagin (CDJ--P) method.
The case of multi-channel Zeno dragging, where multiple measurement observables are either randomly selected to be measured in the discrete case or simultaneously monitored with continuous-time signals, holds broader potential. 
In algorithmic applications, the target state reached at the end of the Zeno dragging procedure is generally unknown a priori, because it encodes the unknown solution to the computational problem under consideration \cite{zhang2024_ksat, benjamin2017measurement, BZF_PRA}. 
In this case, the optimal schedule $\theta(t)$, which can be obtained with optimal control theory  based on a full system analysis, provides an upper bound on the performance achievable with a realistic ``optimal'' schedule for the algorithmic application. 

In this paper, we address several of the above-mentioned questions. Specifically, we first prove a version of the adiabatic theorem for multi-channel Zeno dragging with discrete-time measurement of arbitrary strength. 
Indeed, in the time-continuous limit, i.e., with measurement outcomes obtained from continuous-time signals, we are able to weakly and simultaneously measure several non-commuting observables, whose common 0-eigenspace encodes the solution. 
We show that the convergence is guaranteed if one changes $\theta(t)$ slowly enough, compared to the decay of coherences between the solution space and the non-solution space of the ($\theta(t)$-dependent) observables being measured. 
This criterion properly captures the effect of time resolution, promising faster convergence when continuously observing the output signal, as empirically observed in \cite{zhang2024_ksat}.

We then switch to a perspective of optimal control theory to improve the performance in this weak continuous measurement limit, considering optimization of the schedule $\theta(t)$ using first the average (Lindblad) dynamics, and then using the measurement--conditioned dynamics via the CDJ--P formalism. We show that we are able to numerically optimize the measurement basis schedule $\theta(t)$ under arbitrary running and terminal costs, which we demonstrate here with multiple measurement channels for $k$-SAT problems on few-qubit systems. Our numerical results show that such schedule optimization can significantly improve the performance of Zeno dragging for solving quantum algorithms. 

We also note that the idea of dissipative and measurement-driven quantum algorithms has been explored theoretically in several previous works in the literature \cite{verstraete2009quantum, BZF_PRA, PhysRevA.66.032314}. However, these approaches differ from the current work in that they focus either on preparing the ground state of a general Hamiltonian with the associated requirement of implementing the measurement globally \cite{PhysRevA.66.032314}, or on local projective measurements without considering the duration of the measurement process itself \cite{verstraete2009quantum, BZF_PRA}. We will see that the bottleneck governing the overall convergence time is the high number of iterations of the measurement process. Therefore, taking into account the measurement duration is essential to obtain meaningful bounds. We emphasize that the approach presented in the present work considers both varying measurement strength and local measurements for frustration-free operators. We will show that our theoretical results provide a convergence bound in Theorem \ref{theorem_convergence} that improves on the previous results with projective measurements in \cite{BZF_PRA} and also provides a better convergence bound than \cite{BZF_PRA, PhysRevA.66.032314} in terms of the dependence on the spectral gap.

The remainder of this paper is organized as follows. Section \ref{sec_preliminaries} introduces the $k$-SAT setting within the more general context of finding ground states of frustration-free Hamiltonians and defines the Zeno dragging model studied in this work.
The conditional and unconditional dynamics resulting from this measurement-driven paradigm are presented, and the relation to analog quantum computing is clarified. A contrasting implementation within the paradigm of digital (gate--based) quantum computing, that uses entangling gates and ancilla qubits for generalized mid-circuit measurements, is also presented there. 
In Section \ref{sec_convergence}, we then perform a theoretical analysis establishing sufficient convergence conditions for the Zeno dragging implementations. Section \ref{sec_OCT} reviews the relevant concepts of optimal control theory and builds the framework for implementing schedule optimization of the measurement axis parameter $\theta(t)$ in Zeno dragging. We then demonstrate the schedule optimization numerically in Section \ref{sec_numerics}. Finally, we discuss possible future directions and conclude in Section \ref{sec_discussion}. Some details about all of these points can be found in the Appendices.

\section{Defining the Setting for Zeno Dragging and $k$-SAT}\label{sec_preliminaries}

We are interested in preparing ground states of a frustration-free Hermitian operator $\hat{O}_p  =  \frac{1}{m} \sum_{\alpha=1}^m\hat{P}_{\alpha}$ on a (finite-dimensional) Hilbert space. Typically the $\hat{P}_\alpha$ are few-body Hamiltonians, and in particular here we will assume that each of these is a projector involving a few qubits. 
Any ground state of a \emph{frustration-free} Hermitian operator is the simultaneous ground state (here the simultaneous $0$-eigenstate) of all the $\{ \hat{P}_{\alpha}\}_{\alpha = 1}^m$. 
Note that by decomposing $\hat{P}_\alpha = \sum_{\alpha'} c_{\alpha'} \hat{P}_{\alpha,\alpha'}$ with projectors $\hat{P}_{\alpha,\alpha'}$ and positive scalars $c_{\alpha'}$, the setting can be easily generalized to $\hat{P}_\alpha$ being general positive semidefinite operators (not necessarily projectors). 
It is always possible to introduce a control parameter $\theta$, which runs from $\theta_i$ to $\theta_f$, into the set of projectors so that the running problem cost operator
\begin{equation}\label{eq:cost_op}
    \hat{O}(\theta) = \frac{1}{m}\sum_{\alpha}\hat{P}_{\alpha}(\theta)
\end{equation}
remains frustration-free along the path of $\theta$ \cite{BZF_PRA}. 
The final operator $\hat{O}(\theta_f) = \hat{O}_p$ is designed to encode the cost operator of the original problem. It is evident from the definitions of $\hat{O}_p$ and $\hat{P}_\alpha$ that the spectral gap of this final, problem cost operator is an integer multiple of $1/m$, if $\hat{P}_\alpha$ mutually commute, e.g. in the case of classical problems. 
When the initial projector $\hat{O}(\theta_i)$ is designed to have an easy-to-prepare ground state, 
the frustration-free ground state can then in principle be obtained by repeatedly measuring all of the projectors $\hat{P}_\alpha$ while smoothly varying the axis $\theta$ of 
$\hat{P}_\alpha$ from $\theta_i$ to $\theta_f$. 

One example of such $\{\hat{P}_{\alpha}(\theta) \}_{\alpha=1}^m$ was proposed by Benjamin, Zhao and Fitzsimons (BZF) in \cite{benjamin2017measurement, BZF_PRA} to solve classical Boolean satisfiability (SAT) problems, which constitute a type of structured search problem; while many of our results presented below apply more generally, we shall use the class of $k$-SAT problems, where there are $m$ Boolean clauses and each clause interrogates $k$ bits, for illustrative examples.
Appendix \ref{appendix:ksat} provides a brief review of $k$-SAT problems. $k$-SAT problems for $k\geq 3$ are known to be NP-complete \cite{Cook_1971, Karp_1972, Levin_1973}. In \cite{benjamin2017measurement, BZF_PRA}, each constraint (or clause) $\alpha$ of a given instance of the $k$-SAT problem is mapped onto a projector $\hat{P}_{\alpha}$. When the clause $\alpha$ is satisfied, the corresponding projector $\hat{P}_{\alpha}$ is evaluated to give outcome $0$. 
In this way, a simultaneous $0$-eigenstate of all $\hat{P}_{\alpha}$ provides an assignment satisfying all clauses in the original $k$-SAT problem, and thus a solution, if there is any. Specifically, in the projective measurement setting of BZF, the Hilbert space corresponds to $n$ qubits, while each projector $\hat{P}_{\alpha}(\theta)$ only acts on $k$ qubits and corresponds to a clause in a $k$-SAT problem instance:
\begin{equation}\label{eq:projector}
    \hat{P}_{\alpha}(\theta) = \bigotimes_{i=1}^k |l_{\alpha_i}\theta\rangle \langle l_{\alpha_i}\theta|
\end{equation}
with
\begin{equation}\label{eq:projector2}
    |l_{\alpha_i}\theta\rangle = \hat{R}_y(\pi + l_{\alpha_i}\theta)\; |+\rangle_{\alpha_i} \;\; , \quad \text{for}\quad \ket{+} \equiv \tfrac{1}{\sqrt{2}}\left(\ket{0} + \ket{1} \right),
\end{equation}
where $\hat{R}_y(\phi)$ is the single-qubit rotation operator by an angle  $\phi$ around the $y$ axis of the Bloch sphere. 
The $\alpha_i \in [n]$ label the (qu)bits involved in the clause $\alpha$, while the $\{l_{\alpha_1}, l_{\alpha_2},\cdots, l_{\alpha_k} \} \in \{-1, +1 \}^k$ encode whether the corresponding variable is negated or not in the clause. 
The main idea is that each projector $\hat{P}_{\alpha}(\theta)$ then features a 1-eigenvalue subspace where the qubits $\alpha_1,...,\alpha_k$ are all orthogonal to the value favored by the clause, and 
a 0-eigenvalue subspace where the clause $\alpha$ is satisfied. 
For $0< \theta<\pi/2$, the encoding of T 
(true)
and F (false) correspond to non-orthogonal qubit states, hence they are only weakly distinguished. 
This implies that different $\hat{P}_{\alpha}(\theta)$ do not commute in general. 
Nevertheless, the solution space is a common 0-eigenspace of all the $\hat{P}_{\alpha}(\theta)$.
Thus, after mapping a  classical $k$-SAT problem instance onto the $k \cdot m$ parameters $l_{\alpha_i}$, the instantaneous 0-eigenvalue ground subspace of the operator $\hat{O}(\theta)$ at any given $\theta>0$ would then characterize the corresponding $k$-SAT solutions, if there are any. 

In the remainder of this paper, we analyze the state evolution assuming that the $k$-SAT instance does have a solution, implying that $\hat{O}(\theta)$ does feature a simultaneous 0-eigenvalue and we denote $\hat{\Pi}_0(\theta)$ as the projector onto the corresponding subspace. 

In the decision version of $k$-SAT \cite{calabro, Vazirani}, i.e.,~where one asks whether a 0 eigenvalue solution exists or not for a given instance, the above analysis suffices to characterize our ability to provide an arbitrarily accurate answer. 
Indeed, by the definition of NP, any proposed solution can be efficiently verified \cite{Cook_1971, Karp_1972, Levin_1973}. 
When the instance has no solution, any trial will fail. 
When the instance has a solution and we have performed successful Zeno dragging, there is a high probability --- say $\xi$ --- to sample a state that does satisfy the instance; by trying several times, the probability to never observe a valid sample decays exponentially. 
In other words, by declaring ``instance has no solution'' if we have seen none after $\varpi$ trials, we correctly solve the problem with probability $1-(1-\xi)^\varpi$. 
Therefore, the only critical value here is the probability $\xi$ of successful Zeno dragging for $k$-SAT instances that do have a solution. To be useful, $\xi$ should be significantly larger than the probability $D_\mathrm{sol}/2^n$ of uninformed random sampling when the instance has $D_\mathrm{sol}$ solutions; typically, there are hard instances with $D_\mathrm{sol}=O(1)$ \cite{calabro, Vazirani}. 

To initialize the $k$-SAT ``ground state'' for Zeno dragging, we prepare the complete superposition
$|\psi_0\rangle = |+++ ... +\rangle = (\ket{1}+\ket{0})^{\otimes n}/(2^{n/2})$, which is the ground state of any $\hat{O}(\theta_i=0)$. 
From there, the problem instance, the operator $\hat{O}(\theta)$, and hence the solution to the problem instance, are gradually refined as $\theta$ increases (more on this below), until we have aligned our measurement axes with the computational basis at $\theta_f = \pi/2$. 

If the state has been Zeno dragged along the ground subspace of $\hat{O}(\theta)$ to achieve a final accuracy $\xi \in [0,1]$, then the final measurement will provide a $k$-SAT solution with probability $\xi$. This is to be contrasted with the probability $D_\mathrm{sol}/2^n$ that we could obtain by measuring $|\psi_0\rangle$ in the computational basis directly at the start, 
which is just equivalent to guessing possible solutions at random.

There are several non-equivalent ways to implement the Zeno dragging when gradually changing $\theta$ from $\theta_i$ to $\theta_f$ in a general frustration-free setting, since the $\hat{P}_{\alpha}(\theta)$ do not all mutually commute and hence cannot all be projectively measured in parallel. This motivates us to explicitly model the measurement time resolution. 

For the measurement associated to a single clause $\alpha$, we consider Kraus operators $\{ \hat{M}_{r_{\alpha},\alpha}(\theta)\}$ inspired by a ubiquitous setting in quantum optical systems \cite{selective_cqm, CQM.dd, quantum_bayesian, Steinmetz}, where a Gaussian measurement apparatus generates a continuous--valued measurement signal $r_\alpha(t)$ over a measurement duration $\Delta t$:
\begin{equation} \label{eq:Kraus}
    \begin{aligned}
    \hat{M}_{r_{\alpha},\alpha}(\theta)=&\left(\frac{\Delta t}{2 \pi}\right)^{\frac{1}{4}}\bigg\lbrace \exp\left[-\frac{\Delta t}{4}\left(r_{\alpha}+\frac{1}{\sqrt{\tau}}\right)^2\right] \hat{P}_{\alpha}(\theta) +\exp\left[-\frac{\Delta t}{4}\left(r_{\alpha}-\frac{1}{\sqrt{\tau}}\right)^2\right]\left(\hat{\openone} - \hat{P}_{\alpha}(\theta)\right)\bigg\rbrace,
    \end{aligned}
\end{equation}
with $\tau$ the ``characteristic measurement time'' (or inverse of the measurement rate $\Gamma = 1/4\tau$) \cite{RevModPhys.86.1391, PhysRevA.47.642, PhysRevA.50.5256, CQM.dd, BookJordan}, and $\Delta t/\tau$ the measurement strength.
 
One can check that those operators indeed satisfy the fundamental relation $\int dr_{\alpha}\hat{M}^{\dagger}_{r_{\alpha},\alpha}(\theta) \hat{M}_{r_{\alpha},\alpha}(\theta) = \hat{\openone}$ with $r_{\alpha} \in \mathbb{R}$ being the measurement outcome. For a pre-measurement state $\hat{\rho}_0$, the probability of getting outcome $r_{\alpha}$ is $\mathrm{Tr}(\hat{M}_{r_{\alpha},\alpha}(\theta)\hat{\rho}_0  \hat{M}^{\dagger}_{r_{\alpha},\alpha}(\theta))$, and according to Born's rule the post-measurement state $\hat{\rho}_{r_{\alpha}}$ is given as 
\begin{equation}\label{Born_rule}
\hat{\rho}_{r_{\alpha}} = \frac{\hat{M}_{r_{\alpha},\alpha}(\theta) \hat{\rho}_0 \hat{M}_{r_{\alpha},\alpha}^{\dagger}(\theta)}{\mathrm{Tr}(\hat{M}_{r_{\alpha},\alpha} (\theta)\hat{\rho}_0 \hat{M}_{r_{\alpha},\alpha}^{\dagger}(\theta))} \; . 
\end{equation} 
As $\Delta t /\tau \rightarrow \infty$, the measurement defined in this way becomes equivalent to the standard projective measurement of $\hat{P}_{\alpha}(\theta)$; when $\Delta t /\tau \rightarrow 0$, we are in the weak continuous measurement limit where each individual measurement takes an infinitesimal amount of time and perturbs the state infinitesimally; the state dynamics are then described by diffusive stochastic quantum trajectories \cite{Hudson_Parthasarathy, Qsde1, Wiseman1996, Brun2001Teach, Qsde2, gardiner2004quantum, JacobsSteck, barchielli2009quantum, wiseman2009quantum, BookJacobs, BookJordan}.
For intermediate values of $\Delta t /\tau$, the measurement is performed with finite but limited contrast. This measurement model can be naturally realized in superconducting circuit platforms with continuous-time readouts.

For the measurement of several clauses, the Kraus map \eqref{Born_rule} could in principle be generalized to a multi-dimensional continuous signal $\mathbf{r}(t)$, where the state update is conditioned on the readouts of several non-commuting observables. However, it requires some care to correctly model the corresponding Kraus operators, as we effectively move into a situation that is no longer quantum non-demolition (QND) \cite{wiseman2009quantum, barchielli2009quantum, BookJacobs, BookJordan}. 

In the time--continuum limit, issues of measurement commutation may be disregarded, because measurement ordering only affects the dynamics to order $O(\Delta t^2)$ \cite{Jordan_2005, Ruskov_2010, Ruskov_2012, Rouchon_2015, hacohen2016quantum, Ficheux2018, Lewalle02012020, Steinmetz, Karmakar_2022, Jackson_Caves_2023, Wonglakhon_2024}. In other words, in such a system all the clause observables are, simultaneously and infinitesimally weakly, coupled to a separate measurement degree of freedom. Their measurements are thus effectively made in parallel, with backaction terms just adding up in the associated diffusive stochastic quantum trajectories, irrespectively of whether the observables commute or not. 

When measurement signals are obtained after a finite time interval $\Delta t$, we may imagine different operational realities. In one situation, the physical operation would be the same as in the time-continuum limit, but the finite time resolution of the readout only allows us to condition on samples of $\mathbf{r}(t)$ obtained at discrete time steps $\Delta t$.   For such devices, a composite Kraus operator, conditioning backactions on $\mathbf{r}(\Delta t)$, could in principle be assembled and would reflect experimental reality well
\cite{Robinet}. An alternative type of device would, over each small time interval $\Delta t$, measure a (different) set of only commuting observables. In this situation, the Kraus operators \eqref{eq:Kraus} remain a valid model, and their sequential time-ordered application determines the overall evolution. This leads to a more ``digital'' conception of the overall situation, of the type discussed in \secref{sec:Q-circuit} and Fig.~\ref{fig:collision_1Q}. The two distinct physical pictures coalesce in the continuum limit $\Delta t/\tau \rightarrow 0$.

A last point deserving some attention for the multi-channel measurement situation is the normalization of overall measurement rate $\Gamma = 1/4\tau$ (or strength $\Delta t/\tau)$. This part of the device modeling is of course related to the previous discussion about continuous versus sequential measurements. The most optimistic assumption would be that each clause can be measured at a given speed, independent of the fact that it would be measured together with other clauses. 
An intermediate assumption may be that commuting clauses can be measured together, each at a given speed independent of their number, but that measuring, e.g., two non-commuting observables ``together'' means that we must measure each at half its maximal speed (the latter is reminiscent of heterodyne measurement in quantum optics \cite{AK_1965, shapiro_phase_1984, shapiro_quantum_2009, Caves_2012, ochoa2018simultaneous} ). Finally, the most pessimistic assumption would be that the device features a fixed number of output channels, so that the effective measurement speed of each individual clause is proportional to $1/m$, where $m$ is the number of clauses.

In this work, for finite $\Delta t$, we will generally follow a (pessimistic) model compatible with selecting and measuring a single clause at random for every time slice $\Delta t$. In the event that different environmental modes can be used for performing different measurements in parallel on a device, our conclusions will differ at most by the factor $1/m$ on the measurement strength.

\subsection{Unconditional Dynamics}
We begin by considering Zeno dragging based on dissipation; this setting is equivalent to averaging over all possible measurement results, or dissipating information without detecting measurement outcomes at all. 
We note that the QZE works on average, i.e.~autonomously, and this principle underlies the effectiveness of dissipative Zeno dragging.
Performance can be improved by collecting measurement outcomes to perform filtering, post-selection, and/or feedback, but these items are not strictly \emph{necessary} for Zeno based schemes to function.

For any measurement of duration $\Delta t$, averaging over the randomness of both the measurement result $r_{\alpha}$ and also the random selection of clause $\alpha$, the resulting dynamics is then described by the quantum channel: 
\begin{equation}\label{eq:measurement_channel}
\begin{aligned}
    \mathcal{T}(\theta)[\hat{\rho}] &= \frac{1}{m}\sum_{\alpha = 1}^m\int_{-\infty}^{\infty}dr_{\alpha} \hat{M}_{r_{\alpha},\alpha}(\theta) \hat{\rho }\hat{M}^{\dagger}_{r_{\alpha},\alpha}(\theta) \\
    &= \hat{\rho} + \frac{2\beta}{m} \sum_{\alpha=1}^m \left[\hat{P}_{\alpha}(\theta) \hat{\rho} \hat{P}_{\alpha}(\theta) - \frac{1}{2}\left(\hat{P}_{\alpha}(\theta)\hat{\rho} + \hat{\rho} \hat{P}_{\alpha}(\theta) \right) \right],
\end{aligned}
\end{equation}
where $\beta = 1 - e^{-\Delta t/2\tau}$. When $\Delta t /\tau \rightarrow 0$, which is equivalent to the weak continuous limit of the measurement  process, \eqref{eq:measurement_channel} is equivalent to the Lindblad master equation \cite{lindblad1976generators}
\begin{equation}\label{eq:lindblad}
\begin{aligned}
    \frac{d\hat{\rho}}{dt} &= \frac{1}{m\tau} \sum_{\alpha=1}^m \left[\hat{P}_{\alpha}(\theta) \hat{\rho} \hat{P}_{\alpha}(\theta) - \frac{1}{2}\left(\hat{P}_{\alpha}(\theta)\hat{\rho} + \hat{\rho} \hat{P}_{\alpha}(\theta) \right) \right]  = \frac{1}{\tau}\mathcal{L}(\theta) [\hat{\rho}],
\end{aligned}
\end{equation}
where we have defined the Lindbladian superoperator
\begin{equation}
\begin{aligned}
    \mathcal{L}(\theta) [\hat{\bullet}] &= \frac{1}{m}\sum_{\alpha=1}^m \left[\hat{P}_{\alpha}(\theta) \,\hat{\bullet}\, \hat{P}_{\alpha}(\theta) - \frac{1}{2}\left(\hat{P}_{\alpha}(\theta)\,\hat{\bullet} + \hat{\bullet}\,\hat{P}_{\alpha}(\theta) \right) \right] \; .
\end{aligned}
\end{equation}
The characteristic time $\tau$ expresses how fast the output channels convey information about the quantum system. The prefactor $\frac{1}{m}$ again corresponds to the pessimistic assumptions that a single output channel of rate $1/\tau$ has to be shared by all the clauses. 

\subsection{Conditional Dynamics}
The above treatment is based on the average effect of measurement processes, without recording the actual outcomes,  and is thus described by a deterministic quantum channel. In this paper, we will also be interested in the diffusive stochastic dynamics obtained by conditioning on measurement outcomes $\{ r_{\alpha} \}_{\alpha=1}^m$ from simultaneous, weak continuous monitoring of all the $\{ \hat{P}_{\alpha}(\theta)\}_{\alpha=1}^m$. Such dynamics are described by a stochastic master equation, which we here write in Stratonovich form
\cite{GardinerStochastic,bookVanKampen,BookKloedenPlaten,wiseman2009quantum,Lewalle02012020}
\begin{equation}\label{eq:strat_SME}
\begin{aligned}
    d\hat{\rho} 
    &= \frac{1}{m\sqrt{\tau}}\sum_{\alpha = 1}^m \left(r_{\alpha} - \frac{1}{\sqrt{\tau}}\right) \left[\hat{P}_{\alpha}(\theta)\hat{\rho} + \hat{\rho} \hat{P}_{\alpha}(\theta)- 2\hat{\rho}\, \mathrm{Tr}\{\hat{P}_{\alpha}(\theta)\hat{\rho}\}\right] dt\\
    &= \frac{1}{m \sqrt{\tau}} \sum_{\alpha=1}^m \mathcal{F}_{\alpha}(\hat{\rho}, r_{\alpha}, \theta)\, dt\; , \\
    r_\alpha dt &= \frac{2}{\sqrt{\tau}}\mathrm{Tr}(\hat{P}_{\alpha} \hat{\rho})dt + \; dW_{t,\alpha} \; ,
\end{aligned}
\end{equation}
where $dW_{t,\alpha}$ is independent white noise for each $\alpha$ describing the stochastic nature of quantum measurement results, and we have defined $\mathcal{F}_{\alpha}(\hat{\rho}, r_{\alpha}, \theta) =(r_{\alpha} - \frac{1}{\sqrt{\tau}})[\hat{P}_{\alpha}(\theta)\hat{\rho}
 +  \hat{\rho} \hat{P}_{\alpha}(\theta)-  2\hat{\rho} \mathrm{Tr}\{\hat{P}_{\alpha}(\theta)\hat{\rho}\}] $ for later convenience.

\subsection{Alternative Quantum Circuit Implementation \label{sec:Q-circuit}}

\begin{figure}[ptb]
\includegraphics[width = \textwidth]{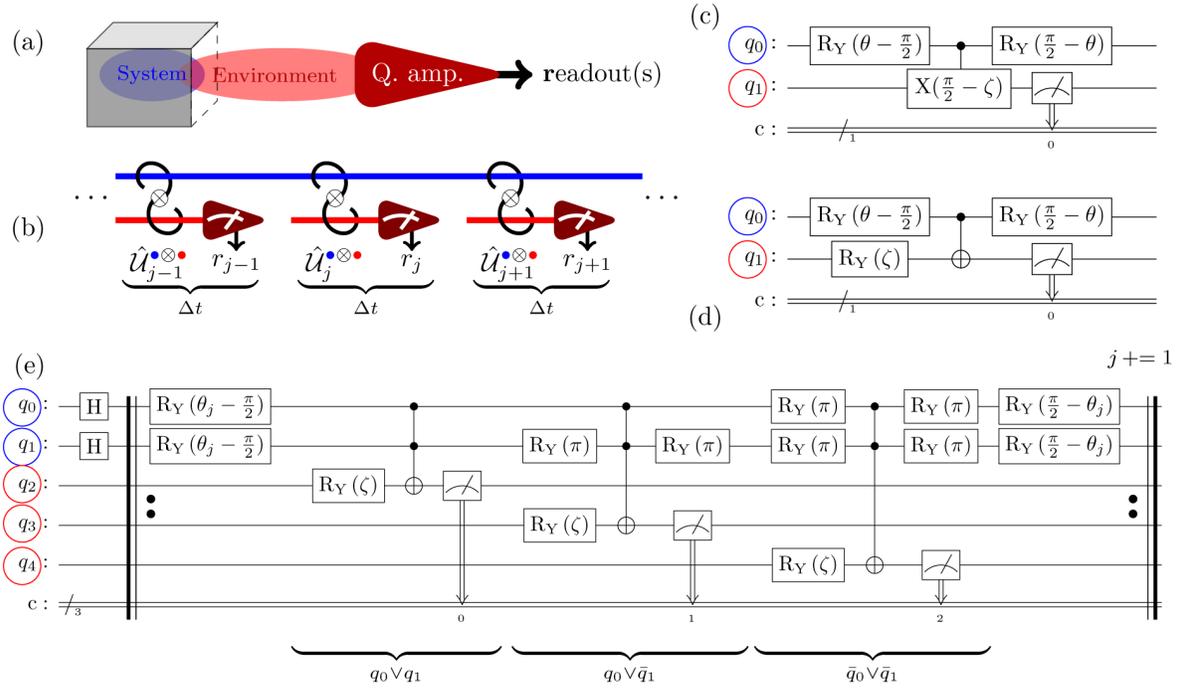}
\caption{
Panels (a) and (b) are adapted from \cite{Philippe_Thesis}, and illustrate a quantum system monitored through its interaction with an auxiliary (environmental) degree of freedom that mediates information exchange with a detector that amplifies some outcomes to a classical scale. 
The representation a) is consistent with e.g.~standard ``collision model'' approaches to modeling open quantum system processes \cite{Ciccarello_2022}. 
Repeated (Markovian) interactions with environmental time--slices, as shown in panel (b), serves as a model for successive generalized quantum measurements, each yielding a readout $r_j$ in the $j^\mathrm{th}$ timeslice. 
The ``environmental'' degrees of freedom mediating information exchange with a detector (red ``meter qubits'') are always reset after they are measured. 
In the limit of infinitesimal $\Delta t$ and infinitesimal correlations generated between the system and environment, we recover continuous weak monitoring from such a picture (or Lindblad dynamics, if we average over the readouts $\mathbf{r}$). 
While panels (a) and (b) apply for any kind of quantum system and environment, as well as for arbitrary strength of any associated measurements, we have drawn panel (b) to be reminiscent of a quantum circuit. 
Panels (c) and (d) make this notion more explicit by expressing models of generalized measurements pertinent to this paper, using only qubits for both the system (blue) and environment (red), assuming $q_1$ is initialized to the computational state $|0\rangle$. 
Specifically, panels (c) and (d) present two options for a quantum circuit that performs a generalized measurement of $q_0$ along the $\hat{\sigma}_\theta = \hat{\sigma}_x\,\cos\theta + \hat{\sigma}_z\,\sin\theta$ axis. 
In this digital setting, the effective measurement strength is set by the degree of entanglement formed between the system $q_0$ and environment $q_1$ and is quantified by the parameter $\zeta$
in panel (c), which shows the implementation of a generalization of the CX gate where $\mathrm{CX}(\tfrac{\pi}{2}-\zeta) = e^{-i\,(\pi/2-\zeta)\,\mathrm{CX}} = \hat{\mathbb{I}}\,\cos(\tfrac{\pi}{2}-\zeta) -i\,\mathrm{CX}\,\sin(\tfrac{\pi}{2}-\zeta)$. This gate interpolates between identity at $\zeta = \pi/2$, and CX at $\zeta = 0$ (up to a phase), and constitutes a direct implementation of the concepts implied in panels (a) and (b) \cite{Brun2001Teach}.  
Panel (d) implements a different version in terms of a CNOT = CX gate, which is conceptually more similar to a no--knowledge measurement \cite{Szigeti_2014, Saiphet_2021}. 
In both panels (c) and (d), projective measurement of $q_1$ in the $\hat{\sigma}_z$ basis effectively implements a projective measurement on $q_0$ with $\zeta = 0$, while for $\zeta \rightarrow \pi/2$ the effective measurement strength vanishes. 
(In panel (c), this is because less entanglement is generated between the system and meter as $\zeta \rightarrow \pi/2$; in panel (d) more correlations may be generated, but they are mis-aligned with the eventual amplification axis, such that only some of the correlations are used, and the remainder are quantum--erased.)
Panel (e) extends this kind of a measurement model to an application of the BZF algorithm using discrete weak measurements for a simple two--qubit $2$-SAT problem \cite{zhang2024_ksat}. The system qubits are prepared in a state $\propto (\ket{e} + \ket{g})^{\otimes 2}$ via Hadamard gates $\hat{\mathrm{H}}$, and then cycles of clause measurements are performed as a simple generalization of the measurement in panel (d), now using CCNOT = CCX = Toffoli gates to generate the appropriate correlations between the system qubits, and the auxiliary qubit that is used to measure each clause. 
}\label{fig:collision_1Q}
\end{figure}

Both versions of Zeno dynamics presented above can be viewed as analog dissipative quantum computation models. 
Alternatively, it is also possible in principle to simulate generalized measurement using the ``quantum digital'' standard circuit model. Here we present a simple circuit implementation of generalized clause measurements with observables corresponding to the projectors \eqref{eq:projector} that are, however, implemented with variable strength in a discrete manner. Such a digital model of generalized measurements~\cite{Brun2001Teach} may be of independent interest to researchers who are more familiar with circuit-model quantum computation. This alternative ``digital'' implementation assumes that qubits function as the ``environment'' that mediates information exchange with the detection apparatus (instead of e.g.~a continuous--variable system such as a resonator); entangling gates between the data and meter qubits create correlations to be leveraged for measurements, leading to discrete measurement outcomes upon projection of the meter qubits.

Example circuits are shown in Fig.~\ref{fig:collision_1Q}. 
For $k$-SAT, each clause measurement requires some single qubit rotations and a single $k$-controlled gate. If there are $n$ variables with $m$ clauses, we then need at most $n + m$ qubits to implement all clause measurements, where $n$ are data qubits requiring long coherence times and memory, while the $m$ auxiliary qubits can be less good because they are being used to mediate readout, and are reset every cycle. Strictly speaking, a single auxiliary qubit may be recycled for all measurements, if it can be reset extremely fast. The convenience of the analog versus digital implementations depend on the constraints of a given experimental setting. In particular, we stress that from the perspective of the models presented above, and on which the results of the present manuscript are based, the main difference between the ``analog'' setting and the gate--based one of Fig.~\ref{fig:collision_1Q}, is that the analog situation is natural to systems with time--continuous dissipation (i.e.~measurements mediated by always--on dissipation), whereas the circuit implementation is more amenable to situations where measurements are implemented discretely and in a specific order. Between the variations spanned by these choices, we believe that (noisy) implementations of small--scale Zeno--Dragging $k$-SAT problems are feasible in the near term on a diversity of experimental platforms.

This paper is focused on the analog implementation, motivated by the fact that continuous-time readout is the natural setting in superconducting circuit platforms (for specific applications to qubits, see e.g., \cite{hacohen2016quantum, murch2013observing, campagne2016observing, campagne2016using, PhysRevLett.112.170501, PhysRevX.7.031023}). The above digital implementation is presented here as an existence proof, without any claim to be particularly efficient in practice.

\subsection{Summary of the setting}

In the next section, we will analyze the dynamics of Zeno dragging using the general expression of the unconditional measurement process \eqref{eq:measurement_channel}, or in the continuous limit using the unconditional Lindblad dynamics \eqref{eq:lindblad}. 
This provides the average performance in the absence of measurement-based feedback. 
One restriction in the present analysis is that we assume the same $\theta$ and the same $1/\tau$ at all times for all components $\alpha$. This may  be generalized in future work if scheduling different qubits or different clauses individually suggests significant advantages.

\subsection{Summary of notation}

Throughout the paper we denote $\hat{O}(\theta) = \frac{1}{m}\sum_{\alpha=1}^m\hat{P}_{\alpha}(\theta)$ as the frustration-free Hermitian operator that we would like to optimize, where $\hat{P}_\alpha(\theta)$ are $k$-local clause projectors,  $\theta$ is the control parameter, and $m$ is the number of clause projectors.  We denote $\theta_i$ and $\theta_f$ as the initial and final values of $\theta$, and use $\alpha$ to index the clauses. In the measurement setting, we denote $\hat{M}_{r_\alpha, \alpha}(\theta)$ as the Kraus operators, where $r_\alpha$ is the measurement outcome corresponding to clause $\alpha$. We denote $\tau$ as the characteristic measurement time, $\Gamma$ as the measurement rate and $\Delta t$ as the duration for a single measurement. We use $\beta = 1 - e^{-\Delta t/2\tau}$ {as an indicator of measurement strength, where $\beta = 1$ corresponds to projective measurement and the limit $\beta \rightarrow 0$ corresponds to continuous weak measurement}. In the weak continuous limit, the generator of the unconditional dynamics is denoted as $\mathcal{L}(\theta)$, while $\mathcal{F}(\theta)$ plays a similar role for conditional dynamics, with $dW$ denoting the Wiener noise associated with the diffusive continuous measurements. We denote the measurement channel as $\mathcal{T}(\theta)$.

We assume that $\hat{O}(\theta)$ {possesses} a nontrivial {ground-state space} associated with eigenvalue 0. We use $\hat{\Pi}_0(\theta)$,  a  function of the control parameter $\theta$, to denote the projector onto this {ground-state space}. $\mathrm{Tr}[\hat{\Pi}_0(\theta)]$ is the dimension of the {ground-state space}. We will use $G(\theta)$ {to denote} the spectral gap of $\hat{O}(\theta)$ and  $f(\theta)$ to denote the fidelity with respect to $\hat{\Pi}_0(\theta)$. Finally, we denote the number of qubits as $n$, the number of $\theta$-increments as $N$, and the number of times the measurement channel $\mathcal{T}(\theta)$ is applied at a fixed value of $\theta$ as $M(\theta)$. {We shall often omit the explicit $\theta$ dependence of $\hat{\Pi}_0(\theta)$, $G(\theta)$, $f(\theta)$, $\mathcal{T}(\theta)$, and $M(\theta)$, for notational convenience, but note that they should all be understood as $\theta$-dependent unless otherwise specified.}

In {the} setting of  optimal control, we denote $J$ as the cost function, and $\mathcal{S}$ as the stochastic action. We use $\mathbf{q}$ and $\mathbf{\Lambda}$ to denote the coordinates and costates of the dynamical systems being optimized, $\mathbf{u}$ to denote their controls and $\mathbf{f}$ to denote their dynamical functions. The Pontryagin Hamiltonian will be denoted as $\mathbb{H}$.

\section{Convergence of Multi-channel Zeno Dragging}\label{sec_convergence}

In this section, we provide conditions under which the convergence towards a state with significant overlap on the target states is guaranteed in multi-channel Zeno dragging. 
The result can be viewed as a version of the adiabatic theorem adapted to open quantum systems driven by the Lindblad equation \cite{Sarandy_2005,Vacanti_2014,Campos_2016}, instead of Schr\"odinger equation in closed systems. Similarly to the latter and more standard adiabatic theorem \cite{RevModPhys_adiabatic}, our convergence result requires modulation of $\theta(t)$ slowly enough, compared to the inverse of a spectral gap $G(\theta)$, defined to be $G(\theta) = \min_{i}\{\lambda_i(\theta) >0\}$, where the $\lambda_i(\theta)$ are eigenvalues of the target Hermitian observable $\hat{O}(\theta)$. Perhaps less intuitively, this spectral gap is not the one of the Lindbladian $\mathcal{L}(\theta)$, but instead the gap of the target Hermitian observable $\hat{O}(\theta)$, which may be much larger. We will also see in this section how this theoretical condition captures the information loss incurred by our model at finite $\Delta t$. Indeed, as already empirically observed in \cite{zhang2024_ksat}, the weak continuous limit $\Delta t /\tau \rightarrow 0$ will be optimal for our criterion too, although this only concerns pre-factors, i.e.~the scaling of the algorithm for large $n$ and $m$ remains the same.

\subsection{Convergence by Lindbladian Mixing}
One immediate  observation of the dynamics described by \eqref{eq:measurement_channel} is that, for any fixed $\theta$, it keeps the state fidelity to the $\theta$-associated ground-state space unchanged, i.e.
\begin{equation}
    \mathrm{Tr}\left\{\mathcal{T}(\theta)[\hat{\rho}] \; \hat{\Pi}_0(\theta)\right\} = \mathrm{Tr}\left\{\hat{\rho}\; \hat{\Pi}_0(\theta)\right\} \; ,
\end{equation}
which is proved in Appendix \ref{appendix:Lower bound of the solution state fidelity} Lemma \ref{lemma_decomposition}. At first glance, this seems suspicious, because it states that quantum Zeno dragging would never increase fidelity to the ground-state space: when the measurement basis $\theta$ is changed, fidelity typically decreases, while under Kraus map / Lindblad evolution, it just stays constant. However, like in adiabatic computing, the point is to start in a particular situation with high fidelity (typically $\mathrm{Tr}\{\hat{\rho}(0) \hat{\Pi}_0(\theta_i)\} = 1$), and then ensure that the total drop in fidelity remains limited before reaching $\theta_f$. 
In Proposition \ref{proposition_mixing} below, we show that this is precisely what the Kraus map / Lindblad evolution ensures, as illustrated on Fig.~\ref{fig:theorem_visualization}: once the Kraus map/Lindblad evolution has reduced the coherence between the ground-state space and the other eigenspaces of $\hat{O}(\theta)$, the fidelity of the resulting mixed state to the ground-state space of $\hat{O}(\theta+\Delta\theta)$ is much higher than for a general state featuring the same fidelity to the ground-state space of $\hat{O}(\theta)$. We refer to this mechanism as Lindbladian mixing. 
In other words: simply moving $\theta$ from $\theta_i$ to $\theta_f$ in small increments $\Delta\theta$ without ever applying \eqref{eq:measurement_channel} --- thus without ever changing the state --- would in general provide an excessively low overlap between the final state (which now equals the initial state) and $\Pi_0(\theta_f)$, and this overlap is of course independent of the chosen schedule $\theta(t)$; however, it is sufficient to interleave the small $\Delta\theta$ increments with dynamics inducing decoherence between ground-state space and non-ground-state space of $\hat{O}(\theta)$, in order to allow dragging the state arbitrarily close to 100\% overlap with $\Pi_0(\theta_f)$. 
This not only elucidates the mechanism by which quantum Zeno Dragging works, but it also provides a guarantee on success probability, equivalent to an upper bound on expected time-to-solution, as stated in the main result Theorem \ref{theorem_convergence} below. The full proofs of these results can be found in Appendix \ref{appendix:Lower bound of the solution state fidelity} and Appendix \ref{appendix:convergence of zeno} respectively.

\begin{figure}[tbhp]
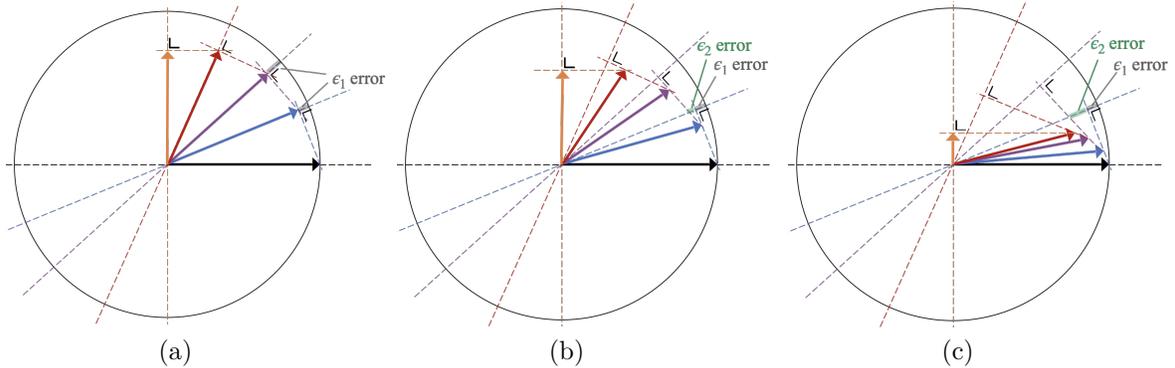

    \centering
    \begin{tikzpicture}
        \node at (-6.2, 0) {\includegraphics[width=0.327\linewidth, trim=2cm 1.4cm 0cm 1.4cm, clip]{visualization2.png}};
        \node at (-6.2-0.3, -2.5) {(a)};
        \node at (-1, 0) {\includegraphics[width=0.327\linewidth, trim=2cm 1.4cm 0cm 1.4cm, clip]{visualization1.png}};
        \node at (-1-0.3, -2.5) {(b)};
        \node at (4.2, 0) {\includegraphics[width=0.327\linewidth, trim=2cm 1.4cm 0cm 1.4cm, clip]{visualization3.png}};
        \node at (4.2-0.3, -2.5) {(c)};
    \end{tikzpicture}
    \caption{Schematic mechanism of Zeno dragging on a qubit. The circle represents the Bloch sphere cut by the ZX plane. An initial state $\ket{+}$ (black arrow) undergoes $N=4$ dragging steps (various colors) with the goal to reach the $\ket{1}$ state (upwards vertical). In the unconditional evolution, each step induces decoherence of the current state (colored arrow) perpendicularly to the current measurement axis (colored dashed line). When this decay is complete before each update of the measurement axis (panel (a)), the corresponding loss in purity accumulates in what Theorem \ref{theorem_convergence} calls $\epsilon_1$ type errors only. When the measurement-induced decay is only partially completed, an additional $\epsilon_2$ type error builds up (panels (b) and (c)), with drastic consequences when the measurement-induced decoherence only shortly drags the state at each step (panel (c)): a final measurement along the vertical axis then yields fidelity to $\ket{1}$ barely above 50\%.}\label{fig:theorem_visualization}
\end{figure}

\begin{restatable}[Lower bound on the fidelity]{proposition}{mainproposition}\label{proposition_mixing}
    Let the projectors onto the 0-eigenvalue ground-state space of $\hat{O}(\theta)$ at some $\theta$ and $\theta +\Delta \theta$ be $\hat{\Pi}_0(\theta)$ and $\hat{\Pi}_0(\theta + \Delta \theta)$. Characterize the overlap between these two spaces by $\langle \psi_0 | \hat{\Pi}_0(\theta + \Delta \theta) | \psi_0 \rangle \geq \delta >0$ for all normalized $|\psi_0 \rangle$ satisfying $\hat{\Pi}_0(\theta) \, |\psi_0 \rangle = |\psi_0 \rangle$. If the quantum state $\hat{\rho}$ has fidelity $f$ with respect to $\hat{\Pi}_0(\theta)$, then after application of $\mathcal{T}(\theta)$, the fidelity $f^{\prime}$ with respect to $\hat{\Pi}_0(\theta + \Delta \theta)$ is lower bounded by    
    \begin{equation}\label{eq:fidelity_drop}
        f^{\prime} \geq f \delta - 2\left[1 - \beta \, G(\theta) \right]\,\sqrt{a(\delta)\,f(1-f)} \; ,
    \end{equation}
where $G(\theta)$ is the spectral gap between 0 and the second lowest eigenvalue of the operator $\hat{O}(\theta)$, and $a(\delta)=\delta-\delta^2$ for $\delta \in [1/2,1]$ or $a(\delta)=1/4$ for $\delta \in [0,1/2]$.
\end{restatable}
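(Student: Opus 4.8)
The plan is to work in the orthogonal decomposition of the state induced by the old groundspace projector $\hat{\Pi}_0 \equiv \hat{\Pi}_0(\theta)$ and its complement $\hat{Q} = \hat{\openone} - \hat{\Pi}_0$, writing $\hat{\rho} = A + C + C^\dagger + B$ with $A = \hat{\Pi}_0\hat{\rho}\hat{\Pi}_0$, $C = \hat{\Pi}_0\hat{\rho}\hat{Q}$, and $B = \hat{Q}\hat{\rho}\hat{Q}$, so that $f = \mathrm{Tr}[A]$ and $1-f = \mathrm{Tr}[B]$. The first step is to record how $\mathcal{T}(\theta)$ acts blockwise. Since the solution space is the common $0$-eigenspace, $\hat{P}_\alpha\hat{\Pi}_0 = \hat{\Pi}_0\hat{P}_\alpha = 0$, whence $\hat{P}_\alpha = \hat{Q}\hat{P}_\alpha\hat{Q}$; substituting into \eqref{eq:measurement_channel} and using $\tfrac1m\sum_\alpha\hat{P}_\alpha = \hat{O}$ gives $\mathcal{T}(\theta)[A] = A$, the pure contraction of coherences $\mathcal{T}(\theta)[C] = C\,(\hat{\openone}-\beta\hat{O})$, and $\mathcal{T}(\theta)[B]\ge 0$ still supported on $\hat{Q}$. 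Abbreviating $G_{\mathrm{op}} = \hat{\openone}-\beta\hat{O}$ (which commutes with $\hat{\Pi}_0$), the post-channel fidelity to $\hat{\Pi}_0' \equiv \hat{\Pi}_0(\theta+\Delta\theta)$ then expands as
\[ f' = \mathrm{Tr}[A\hat{\Pi}_0'] + 2\,\mathrm{Re}\,\mathrm{Tr}[C\,G_{\mathrm{op}}\,\hat{\Pi}_0'] + \mathrm{Tr}[\mathcal{T}(\theta)[B]\,\hat{\Pi}_0']. \]

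Next I would dispatch the two easy contributions. The overlap hypothesis is exactly the operator inequality $\hat{\Pi}_0\hat{\Pi}_0'\hat{\Pi}_0 \ge \delta\,\hat{\Pi}_0$ on the groundspace, so $\mathrm{Tr}[A\hat{\Pi}_0'] = \mathrm{Tr}[A\,\hat{\Pi}_0\hat{\Pi}_0'\hat{\Pi}_0] \ge \delta\,\mathrm{Tr}[A] = \delta f$ by positivity of $A$; and complete positivity of $\mathcal{T}(\theta)$ makes the $B$-term nonnegative, so it is simply dropped. This reduces the whole statement to the single cross-term estimate $|\mathrm{Tr}[C\,G_{\mathrm{op}}\,\hat{\Pi}_0']| \le (1-\beta G)\sqrt{a(\delta)\,f(1-f)}$, after which $f' \ge \delta f - 2(1-\beta G)\sqrt{a(\delta)f(1-f)}$ follows immediately from the expansion above.

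The cross term is the crux, and I would bound it through three ingredients. First, because $C\,G_{\mathrm{op}}$ lives entirely in the $\hat{\Pi}_0\,\cdots\,\hat{Q}$ block, only the off-diagonal block $P_{Q0} = \hat{Q}\hat{\Pi}_0'\hat{\Pi}_0$ of the new projector survives the trace, i.e.\ $\mathrm{Tr}[C\,G_{\mathrm{op}}\,\hat{\Pi}_0'] = \mathrm{Tr}[C\,G_{\mathrm{op}}\,P_{Q0}]$. Second, $\|\hat{Q}G_{\mathrm{op}}\|_\infty = 1-\beta G$, since $\hat{O}$ restricted to $\hat{Q}$ has spectrum in $[G,1]$ and $\beta\in[0,1]$ keeps every factor $1-\beta\lambda$ nonnegative. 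Third, the block-positivity factorization $C=\sqrt{A}\,K\sqrt{B}$ with $\|K\|_\infty\le1$, combined with $\|\sqrt A\|_2=\sqrt f$ and $\|\sqrt B\|_2 = \sqrt{1-f}$, yields $\|C\|_1\le\sqrt{f(1-f)}$. Assembling these via $|\mathrm{Tr}[XY]|\le\|X\|_1\|Y\|_\infty$ and $\|C G_{\mathrm{op}}\|_1\le\|C\|_1\|G_{\mathrm{op}}\|_\infty$ gives the bound, provided one controls $\|P_{Q0}\|_\infty$.

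The hard part will be precisely this last factor $\|P_{Q0}\|_\infty$, which is where the entire $a(\delta)$ dependence is born. My approach is to expand the idempotency $\hat{\Pi}_0'^2 = \hat{\Pi}_0'$ in the $00$ block, producing the identity $P_{0Q}P_{0Q}^\dagger = P_{00}-P_{00}^2$ with $P_{00} = \hat{\Pi}_0\hat{\Pi}_0'\hat{\Pi}_0$. The overlap hypothesis confines the spectrum of $P_{00}$ to $[\delta,1]$, so the spectrum of $P_{00}-P_{00}^2$ lies in $[0,a(\delta)]$; the two branches of $a(\delta)$ are exactly the two cases for where $\mu\mapsto\mu(1-\mu)$ is maximized on $[\delta,1]$, namely at the endpoint $\mu=\delta$ when $\delta\ge1/2$ and at the interior point $\mu=1/2$ when $\delta<1/2$. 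This delivers $\|P_{Q0}\|_\infty=\|P_{0Q}\|_\infty\le\sqrt{a(\delta)}$ and closes the estimate. The main subtlety to get right is bookkeeping which block of $\hat{\Pi}_0'$ couples to the coherences, since a naive Cauchy--Schwarz that ignores this structure only produces the weaker factor $\sqrt{f(1-f)}$ and misses the crucial $\sqrt{a(\delta)}$ suppression.
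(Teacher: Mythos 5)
Your proof is correct, and while it follows the same overall skeleton as the paper's (block decomposition with respect to $\hat{\Pi}_0(\theta)$, blockwise action of $\mathcal{T}(\theta)$ with the coherence block contracted by $\hat{\openone}-\beta\hat{O}$, the diagonal terms dispatched exactly as you do), the cross-term estimate --- which is where all the content lives --- is handled by genuinely different machinery. The paper (via Lemma \ref{lemma_decomposition}) normalizes the coherence block as $\gamma\hat{c}$ and tracks the weighted quantity $\mathrm{Tr}[\hat{c}^\dagger\hat{\rho}_0^{(pi)}\hat{c}]$ built from the pseudo-inverse of the groundspace block; the cross term is then bounded by a Hilbert--Schmidt Cauchy--Schwarz after inserting $\sqrt{\rho_0}\sqrt{\rho_0^{(pi)}}$, with $\sqrt{f(1-f)}$ coming from the Schur-complement positivity constraint $\gamma^2\mathrm{Tr}[\hat{c}^\dagger\hat{\rho}_0^{(pi)}\hat{c}]\le f(1-f)$ and $\sqrt{a(\delta)}$ from an eigenbasis computation $\sum_j|\langle\lambda_i|\hat{\Pi}_0(\theta+\Delta\theta)|\lambda_j\rangle|^2\le x_i-x_i^2$. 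You instead use the trace-norm/operator-norm H\"older inequality: the contraction factorization $C=\sqrt{A}K\sqrt{B}$ yields $\|C\|_1\le\sqrt{f(1-f)}$ (the same positivity input, packaged without pseudo-inverses or any dependence on the rank or eigendecomposition of the groundspace block), and your identity $P_{0Q}P_{0Q}^\dagger=P_{00}-P_{00}^2$ is the coordinate-free form of the paper's eigenbasis estimate --- the paper's inequality is precisely the diagonal of your identity after discarding the off-diagonal part of $P_{00}^2$. Your route is arguably tidier; the paper's weighted formulation is what lets it assert in the lemma that $\mathrm{Tr}[\hat{c}^\dagger\hat{\rho}_0^{(pi)}\hat{c}]$ is exactly preserved while only $\gamma$ contracts, which it reuses for the $M$-fold iterate \eqref{eq:fidelity_drop2} --- though your version iterates just as easily via $\|\hat{Q}(\hat{\openone}-\beta\hat{O})^M\|_\infty=(1-\beta G)^M$. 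One cosmetic slip: in the assembly you write $\|CG_{\mathrm{op}}\|_1\le\|C\|_1\|G_{\mathrm{op}}\|_\infty$, but $\|G_{\mathrm{op}}\|_\infty=1$ because of the groundspace eigenvalue; you need $C=C\hat{Q}$ together with $\|\hat{Q}G_{\mathrm{op}}\|_\infty=1-\beta G$, which is exactly the second ingredient you already state, so nothing is actually broken.
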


When $\mathcal{T}(\theta)$ is applied $M$ times for a fixed $\theta$, the above \eqref{eq:fidelity_drop} is consequently modified to be
\begin{equation}\label{eq:fidelity_drop2}
        f^{\prime} \geq f \delta - 2\left[1 - \beta \, G(\theta) \right]^M\,\sqrt{a(\delta)\,f(1-f)} \, .
\end{equation}
This means that the loss of fidelity can be made exponentially close to $f(1-\delta)$, by applying the measurement channel multiple times. By writing $\delta= \cos(\Delta\phi)$ for some small $\Delta\phi$, we see that reaching this $f(1-\delta)$ regime essentially means that we can move $\hat{\Pi}_0(\theta)$ by an angle $\Delta\phi$ while only losing fidelity proportional to $(\Delta\phi)^2$. When covering a finite angle $\phi$ with $\phi/\Delta\phi$ such steps, the loss in fidelity converges to zero as $\Delta\phi \rightarrow 0$. The number of iterations $M$ needed at each $\theta$ value to maintain this regime is inversely proportional to the spectral gap $G(\theta)$, indicating that the schedule of $\theta(t)$ is important in order to obtain optimized performance.

\begin{restatable}[Convergence of BZF Zeno dragging; see Fig.~\ref{fig:theorem_visualization}]{theorem}{maintheorem}\label{theorem_convergence}
    Let $\Delta t$, $\epsilon_1$, $\epsilon_2 > 0$. Assume\footnote{This precise assumption is inspired by the example of $k$-SAT on $n$ qubits according to \eqref{eq:projector}, \eqref{eq:projector2}, where it obviously holds. It allows in general for a dependence of the ground-spaces overlap on problem size $n$.} $$
    \langle \psi_0 | \hat{\Pi}_0(\theta + \Delta \theta)|\psi_0\rangle\geq \left(\mathrm{cos}(\tfrac{\Delta \theta}{2})\right)^{2n} $$
    for any normalized $|\psi_0\rangle$ in the solution space at $\theta$ i.e.~for any $|\psi_0\rangle$ satisfying $\hat{\Pi}_0(\theta) |\psi_0\rangle = |\psi_0\rangle$. Suppose that one performs Zeno dragging for this system with initial solution space fidelity $1$ at $\theta_i$ and ends at $\theta_f > \theta_i$ using a linear schedule, i.e.~constant increments $\Delta \theta$. Then to ensure a final solution space fidelity $f_N \geq 1 - \epsilon_1 - \epsilon_2$, it is sufficient to take a total number of $\theta-$increments $N = \frac{\theta_f-\theta_i}{\Delta \theta}\geq \frac{n(\theta_f-\theta_i)^2}{4\epsilon_1}$, with each step including $M(\theta)$ applications of the channel $\mathcal{T}(\theta)$, where 
    \begin{equation*} \begin{split}
    M(\theta) &\geq \frac{\mathrm{log}(\frac{1}{\epsilon_2}) + \frac{1}{2}\mathrm{log}(n) + \mathrm{log}(\theta_f-\theta_i)}{\mathrm{log}(\frac{1}{1-\beta  G(\theta)})} \approx \frac{\mathrm{log}(\frac{1}{\epsilon_2}) + \frac{1}{2}\mathrm{log}(n) + \mathrm{log}(\theta_f-\theta_i)}{\beta G(\theta)} \; .
    \end{split} \end{equation*}     
\end{restatable}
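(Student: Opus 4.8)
The plan is to iterate the single-step fidelity bound of Proposition~\ref{proposition_mixing} across the $N$ increments of the linear schedule, and then split the total fidelity deficit into the two advertised pieces: a multiplicative $\epsilon_1$-error coming from the geometric loss $\delta$ incurred each time the groundspace is rotated, and an additive $\epsilon_2$-error coming from the residual coherence that the channel $\mathcal{T}(\theta)$ has not yet fully damped after $M(\theta)$ applications. The hypothesis $\langle\psi_0|\hat\Pi_0(\theta+\Delta\theta)|\psi_0\rangle\geq(\cos(\Delta\theta/2))^{2n}$ lets me set $\delta=(\cos(\Delta\theta/2))^{2n}$ in \eqref{eq:fidelity_drop2}, so that with $M(\theta)$ channel applications at step $k$ the fidelities obey the affine recursion $f_{k+1}\geq\delta\, f_k-\bar c_k$, where I uniformly bound the residual term by $\bar c_k=[1-\beta G(\theta_k)]^{M(\theta_k)}\sqrt{a(\delta)}$ using $\sqrt{f_k(1-f_k)}\leq 1/2$ (which absorbs the factor $2$).

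First I would unroll this recursion. Since $\delta\leq 1$ and $f_0=1$, telescoping gives $f_N\geq\delta^N-\sum_{k=0}^{N-1}\delta^{N-1-k}\bar c_k\geq\delta^N-\sum_k\bar c_k$, cleanly separating the two error sources. Next I would lower-bound $\delta^N=(\cos(\Delta\theta/2))^{2nN}$: using $\sin^2(\Delta\theta/2)\leq(\Delta\theta/2)^2$ together with Bernoulli's inequality $(1-s)^{nN}\geq 1-nN s$, I obtain $\delta^N\geq 1-nN(\Delta\theta/2)^2=1-n(\theta_f-\theta_i)^2/(4N)$ after substituting $\Delta\theta=(\theta_f-\theta_i)/N$. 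Demanding this be $\geq 1-\epsilon_1$ yields exactly the stated bound $N\geq n(\theta_f-\theta_i)^2/(4\epsilon_1)$.

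The crucial step is controlling $\sum_k\bar c_k$, and here the key observation is that one must \emph{not} bound $a(\delta)$ by its crude maximum $1/4$; the residual sum would then grow like $N[1-\beta G]^M$ and diverge as $N\to\infty$. Instead I would use the sharp estimate $a(\delta)\leq 1-\delta=1-(\cos(\Delta\theta/2))^{2n}\leq n(\Delta\theta/2)^2=n(\theta_f-\theta_i)^2/(4N^2)$ (again by Bernoulli), whence $\sqrt{a(\delta)}\leq\sqrt n\,(\theta_f-\theta_i)/(2N)$. The prescribed $M(\theta)$ is precisely what guarantees $[1-\beta G(\theta)]^{M(\theta)}\leq\epsilon_2/(\sqrt n\,(\theta_f-\theta_i))$ after taking logarithms, with denominator $\log\tfrac{1}{1-\beta G(\theta)}\approx\beta G(\theta)$. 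Combining, $\sum_k\bar c_k\leq N\cdot\frac{\epsilon_2}{\sqrt n\,(\theta_f-\theta_i)}\cdot\frac{\sqrt n\,(\theta_f-\theta_i)}{2N}=\epsilon_2/2\leq\epsilon_2$, so that $f_N\geq(1-\epsilon_1)-\epsilon_2$, as claimed.

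I expect the main subtlety to be exactly this cancellation in the $\epsilon_2$ estimate: it works only because the per-step residual carries a factor $\sqrt{a(\delta)}=O(\Delta\theta)=O(1/N)$ that precisely offsets the $N$ summed steps, so I must retain the $\Delta\theta$-dependence of $a(\delta)$ rather than discarding it. A secondary point to verify is that the affine recursion and its telescoping are valid for all intermediate fidelities without assuming $f_k\geq 1/2$ (the bound $\sqrt{f(1-f)}\leq 1/2$ is uniform on $[0,1]$), and that the inequality $a(\delta)\leq 1-\delta$ indeed holds on both branches of the definition of $a(\delta)$.
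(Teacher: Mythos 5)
Your proposal is correct and follows essentially the same route as the paper: iterate the single-step bound \eqref{eq:fidelity_drop2}, obtain the $\epsilon_1$ term from $\delta^N\geq 1-nN(\Delta\theta/2)^2$, and control the $\epsilon_2$ term by retaining the sharp scaling $a(\delta)\leq 1-\delta\leq n(\Delta\theta/2)^2$ so that the $N$ residual contributions of size $O(\sqrt{n}\,\phi/N)\cdot[1-\beta G]^M$ sum to at most $\epsilon_2$ — which is exactly the cancellation the paper exploits. The only (immaterial) difference is bookkeeping: you unroll the affine recursion directly and bound $\delta^{N-1-k}\leq 1$, whereas the paper shifts to the fixed point $f_k+\alpha\, m(\delta)$ with $m(\delta)=\sqrt{2}\sqrt{(1+\delta)/(1-\delta)}$ and evaluates the geometric sum, arriving at the same $\alpha\sqrt{n}\,\phi$ bound.
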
    
Obviously, the same holds if $M(\theta)$ is based on a lower bound $G_{\min}$ on $G(\theta)$. This bound would be less tight but still feature the same bottlenecks. For $\beta = 1$, this recovers the limit of projective measurement. The main idea in the above theorem is that the decrease of fidelity comes from two sources: the population flipping due to the change of the measurement axis ($\epsilon_1$ term), assuming the state was block-diagonal in zero-eigenspace versus nonzero-eigenspace of $\hat{O}(\theta)$; and the error accumulated due to incomplete decay of the coherences between zero-eigenspace and nonzero-eigenspace of $\hat{O}(\theta)$ at each $\theta$ value ($\epsilon_2$ term). 
The role of the spectral gap only comes into play in the second type of error. The result follows from bounding both $\epsilon_1$ and $\epsilon_2$ types of error. First, choosing $\Delta\theta\leq \tfrac{4 \epsilon_1}{n(\theta_f-\theta_i)}$ maintains a sufficiently low $\epsilon_1$ type error, i.e. a low overall fidelity decrease, if we assume that the solution and non-solution eigenspaces of $\hat{O}(\theta)$ have been well mixed at each step ($f(1-\delta)$ regime of fidelity loss in Proposition \ref{proposition_mixing}); second, computing $M$  according to \eqref{eq:fidelity_drop2} bounds the cumulative $\epsilon_2$ error arising from discarding, at each increment of $\Delta\theta$, the residual coherences between the zero-eigenspace and the nonzero-eigenspace of $\hat{O}(\theta)$. 
Here, well-mixed means that coherences between
the solution and non-solution spaces have been forced to decay.

When the number of qubits $n$ becomes large, the number of $\theta$-increment steps required increases linearly. 
This works thanks to the fact that, after perfect mixing, the useful part of the state upon applying a $\Delta\theta$ increment decreases only by a factor $\left(\mathrm{cos}(\frac{\Delta \theta}{2})\right)^{2n}$. 
With $\Delta\theta^2$ scaling as $O(1/n^2)$, the multiplication of $N=O(n)$ such factors remains bounded away from zero. 
The bottleneck in the convergence speed would then likely come from ensuring good mixing at each step, i.e.~the scaling of $G(\theta)$ with the number of qubits $n$. 
In the $k$-SAT problem, this gap increases with $\theta$ and is thus lowest for $\theta_i$ close to $0$. 
In this regard, it may be interesting to note that a first step of size $\Delta\theta_0 = O(1/\sqrt{n})$ is also acceptable in Theorem \ref{theorem_convergence}: it implies a larger error, but only once instead of $n$ times, leading to the same overall result. 

The dependence of the convergence on the spectral gap implies the importance of optimizing the schedule, i.e., the evolution of $\theta(t)$ over time, which is studied in Sections \ref{sec_OCT} and \ref{sec_numerics}. This feature is analogous to the situation in
adiabatic quantum computing, and Theorem \ref{theorem_convergence} can thus also be considered as an adiabatic theorem for open quantum systems under continuous monitoring. However, there are some important differences, as we discuss  further in Section \ref{sec:3.3} below.

\subsection{Advantage of Weak Continuous Limit}

When considering the total dragging time needed to achieve a desired accuracy, the criterion of Theorem \ref{theorem_convergence} favors the limit $\Delta t/\tau \rightarrow 0$, as stated in the following Corollary. This captures an unavoidable loss of information when claiming to measure several non-commuting observables after a finite time interval $\Delta t$ and without looking at the details of how the signal builds up during the time interval. It also matches earlier empirical observations for quantum Zeno dragging \cite{zhang2024_ksat}.

\begin{restatable} [advantage of weak continuous limit]{corollary}{maincorollary}\label{corollary_weak_advantage}
    The total time needed to guarantee a final solution state fidelity $f \geq 1 - \epsilon_1 - \epsilon_2$, according to the criterion of Theorem \ref{theorem_convergence} and assuming a constant measurement setting ($M$,$\Delta t$) for all $\theta$, is: 
    \begin{equation*} 
    T \geq \frac{\mathrm{log}(\frac{1}{\epsilon_2}) + \frac{1}{2}\mathrm{log}(n) + \mathrm{log}(\phi)}{G_{min}} \cdot \frac{n\phi^2}{4\epsilon_1} \cdot \Upsilon \quad\text{with}\quad \Upsilon \equiv  \frac{\Delta t}{1-e^{-\Delta t/2\tau}}.
    \end{equation*}
\end{restatable}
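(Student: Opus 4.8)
The plan is to read off the two resource counts from Theorem~\ref{theorem_convergence} and multiply them by the cost of a single channel application. Writing $\phi \equiv \theta_f-\theta_i$ for the total dragged angle, each of the $N$ linear increments requires $M$ applications of $\mathcal{T}(\theta)$, and each such application consumes one measurement window of duration $\Delta t$; hence the total protocol time is simply $T = N\,M\,\Delta t$. Theorem~\ref{theorem_convergence} supplies the minimal admissible values of the two factors: controlling the $\epsilon_1$-type error requires $N \geq n\phi^2/(4\epsilon_1)$, while controlling the cumulative $\epsilon_2$-type error at every step requires $M \geq [\log(1/\epsilon_2) + \tfrac12\log n + \log\phi]/\log\!\big(1/(1-\beta G(\theta))\big)$, with $\beta = 1-e^{-\Delta t/2\tau}$.

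Since the corollary fixes a single measurement setting $(M,\Delta t)$ for all $\theta$, I would replace $G(\theta)$ by its worst (smallest) value $G_{\min}=\min_\theta G(\theta)$, so that the same $M$ remains admissible across the whole schedule. I would then invoke the elementary relation $\log(1/(1-x))\geq x$ for $x\in[0,1)$, whose small-$x$ form $\log(1/(1-\beta G))\approx \beta G$ is asymptotically exact in the weak-measurement regime, to write the $M$-requirement as $M \approx [\log(1/\epsilon_2)+\tfrac12\log n+\log\phi]/(\beta G_{\min})$. Substituting both factors into $T=NM\Delta t$ and grouping the measurement-strength dependence gives
\begin{equation*}
T \;\geq\; \frac{\log(1/\epsilon_2)+\tfrac12\log n+\log\phi}{G_{\min}}\cdot\frac{n\phi^2}{4\epsilon_1}\cdot\frac{\Delta t}{\beta},
\end{equation*}
and recognizing $\Delta t/\beta = \Delta t/(1-e^{-\Delta t/2\tau}) = \Upsilon$ reproduces the claimed expression. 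The first two factors are independent of the measurement strength, so the entire dependence on $\Delta t/\tau$ is isolated in $\Upsilon$.

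The substantive content --- the \emph{advantage} of the weak continuous limit --- then reduces to showing that $\Upsilon(\Delta t)$ is minimized as $\Delta t/\tau\to 0$. Setting $u=\Delta t/(2\tau)$ so that $\Upsilon = 2\tau\,u/(1-e^{-u})$, I would establish monotonicity by differentiating: the sign of $\tfrac{d}{du}[u/(1-e^{-u})]$ equals the sign of $1-e^{-u}(1+u)$, which is positive for all $u>0$ by the standard inequality $e^{u}>1+u$. Hence $\Upsilon$ increases monotonically with $\Delta t$, and from the expansion $1-e^{-u}=u-u^2/2+\cdots$ it attains its infimum $\Upsilon\to 2\tau$ precisely in the continuum limit $\Delta t/\tau\to 0$. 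I expect the only real subtlety to lie in the bookkeeping around the inequality direction: the theorem's conditions are \emph{sufficient}, so one must be careful in stating whether $\Upsilon$ lower- or upper-bounds the strictly minimal admissible time, and in justifying the replacement $\log(1/(1-\beta G))\approx\beta G$ --- which is harmless here because it is asymptotically exact in exactly the regime $\beta\to 0$ that the corollary singles out as optimal. The monotonicity argument itself is the crux, but it collapses to a one-line convexity estimate once the problem is reduced to the scalar function $u/(1-e^{-u})$.
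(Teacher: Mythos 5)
Your proposal is correct and follows essentially the same route as the paper: the paper's proof is a one-liner stating that $T = N\,M\,\Delta t$ with $N$ and $M$ read off from Theorem~\ref{theorem_convergence}, $G(\theta)$ replaced by $G_{\min}$, and $\beta = 1-e^{-\Delta t/2\tau}$ substituted to yield $\Upsilon$. Your explicit monotonicity argument for $\Upsilon$ via $1-e^{-u}(1+u)>0$ and your remark on the direction of the $\log(1/(1-x))\geq x$ inequality merely make rigorous what the paper asserts in the surrounding discussion without proof.
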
 
The factor $\Upsilon$ is a monotonically increasing function of $\Delta t$. Therefore, the total dragging time attains its optimal value at $\Delta t = 0$, where $\Upsilon = 2\tau$. The projective limit in contrast corresponds to $\Delta t = s \, 2 \tau$ with $s \gg 1$ and thus $\Upsilon \simeq \Delta t$ becomes $s$ times larger, as does the total time $T$. While $s$ can be a large factor, the advantage brought by weak continuous measurement still only reduces $T$ by this constant: it does not affect how $T$ scales with $n$ (both directly, and indirectly through $G_{\min}$) and other parameters. This is also consistent with numerical observations in \cite{zhang2024_ksat}. Note that this advantage is based on the inherently continuous-time model of Zeno dragging, and may not necessarily hold for the ``digital'' version of Fig.~\ref{fig:collision_1Q}, in the event that the operations described there each require additional implementation time overheads that are absent in the present analysis. For more detailed discussion about the role of the measurement strength $\Delta t/\tau = 4\,\Delta t\,\Gamma$ that characterizes the Kraus map and the origin of the advantage of weak continuous limit, particularly in the case of non-commuting measurements, see Appendix \ref{appendix:convergence of zeno}.

Before concluding this section, we wish to emphasize the compatibility of the advantage of the weak continuous limit with existing quantum hardware. The advantage in convergence time attained in this limit is only possible in quantum hardware that can function in analog mode, such as superconducting quantum hardware for which the natural qubit measurement modality is a variable strength continuously monitored measurement. Such platforms have been extensively used to continuously monitor quantum trajectories of qubits \cite{murch2013observing, PhysRevLett.112.170501,campagne2016observing,campagne2016using, PhysRevX.7.031023,hacohen2016quantum}. In contrast, in other quantum hardware  platforms that are restricted to discrete-time logical gate operations such as the circuits in panels c) - d) of Fig. \ref{fig:collision_1Q}, the number of gates will necessarily diverge in the weak-continuous limit.   

In the rest of this work, we shall focus on the weak continuous limit $\Delta t /\tau \rightarrow 0$. Also, for notational convenience, we set $\tau =1$ from now on, such that any time scale discussed hereafter is in units of $\tau$.

\subsection{Summary of Convergence Results}\label{sec:3.3}

With Theorem \ref{theorem_convergence} and its proof, we have shown here that the convergence of the Zeno-dragging dynamics towards the ground state of $\hat{O}$ is guaranteed when two conditions are met. These are that: 1) the control parameter $\theta$ moves slowly, and 2) the number of times the measurement channel $\mathcal{T}$ is applied, $M$, is large compared to $1/G(\theta)$, where $G(\theta)$ is the spectral gap of the Hermitian  operator $\hat{O}(\theta)$. The first condition is easy to satisfy, while the second is usually the bottleneck in performance of the algorithm. We note that $\hat{O}(\theta)$ is a sum of projectors and is not the system Hamiltonian, making this second condition quite different from the analogous condition for adiabatic quantum evolution. As a corollary, we showed that the weak continuous limit $\Delta t/\tau \rightarrow 0$ gives an optimal prefactor in the bound of the convergence time, which provides theoretical support for our previous numerical evidence for the superiority of weak continuous limit in \cite{zhang2024_ksat}.

We note that Theorem \ref{theorem_convergence} shows some similarity to the  adiabatic theorem for quantum annealing and could possibly be considered as an adiabatic theorem for open quantum systems. However, there are several differences between the result of Theorem \ref{theorem_convergence} and the adiabatic theorem for quantum annealing and for adiabatic quantum computing. Firstly, in adiabatic quantum computing and quantum annealing, one usually does not know the location of the minimum gap along the path of the control parameter.  This makes it very hard to optimize the adiabatic or annealing schedule. \cite{Roland_Cerf} provides an exception to this. In contrast, as we show in Section \ref{sec_numerics}, for the measurement-driven algorithm the spectral gap of $\hat{O}(\theta)$ is a monotonically increasing function of $\theta$ and we know that the minimum gap is at the starting point. This makes it significantly easier to obtain a schedule that adapts to the variation of the spectral gap with $\theta$,  which can provide a significant advantage. 
This was first noted for adiabatic quantum computing in \cite{Roland_Cerf}, where the quadratic speedup provided by Grover's algorithm for unstructured search was shown to be recovered by adiabatic quantum algorithms, provided 
that one uses a schedule that follows the spectral gap locally. 
An additional key difference from quantum annealing is that our proposed approach allows the dynamics to be heralded and thereby enables additional operations conditioned on measurement outcomes. This can further improve the performance of the protocol in terms of target state fidelity.  We will provide some results in this direction in Section \ref{sec_numerics}, and view this as an interesting opening to further improve the performance of Zeno dragging in future work. Finally, in terms of the convergence time $T$ scaling with the spectral gap $G$, we note that the quantum annealing literature usually gives $T=\mathcal{O}(\Vert \partial H \Vert/G^2)$ with a strict lower bound $\geq \mathcal{O}(1/G)$ \cite{RevModPhys_adiabatic}. For frustration-free Hamiltonians, as we have here, the spectral gap can be amplified by a square root factor \cite{somma2013spectral}, allowing quantum annealing to explicitly attain a performance in $T = \mathcal{O}(\Vert \partial H \Vert/G)$. 
Our result in Theorem \ref{theorem_convergence} matches this better lower bound in the k-SAT setting, which also improves upon the early results for measurement-driven algorithms in \cite{BZF_PRA, PhysRevA.66.032314}. 
\\

\noindent \textbf{Remark:} Since Theorem \ref{theorem_convergence} expresses the convergence time with respect to the spectral gap of a frustration-free Hamiltonian, one might ask whether spectral gap amplification techniques \cite{somma2013spectral} could lead to a further quadratic speedup. Applying spectral gap amplification here is not trivial, since it leads to a Hamiltonian whose components are no longer bare projectors and whose zero-energy subspace contains the solution, which however sits now in the middle of the spectrum rather than at the minimum of this. As a consequence, the resulting modified Hamiltonian cannot be easily mapped back to Zeno measurement operators. 

\section{Scheduling Optimization With Optimal Control Theory }\label{sec_OCT}
The above analysis provides a theoretical guarantee of Zeno dragging convergence, and shows the optimality of weak continuous limit. However, it does not directly tell us the optimal schedule $\theta(t)$ for performing Zeno dragging. The question is not trivial: should we rather quickly jump over ``bad'' zones, or slow down there to avoid losing our state? In this section, we take a perspective on this problem from optimal control theory and show how to derive the optimal schedule under general circumstances with various demands for optimality. 
In particular, for both unconditional dynamics \eqref{eq:lindblad} and conditional dynamics \eqref{eq:strat_SME}, we will see two types of optimal control tasks naturally arise: the optimal final state problem and the optimal time problem \cite{kirk2004optimal, d2021introduction, quantum_control_review1}. We will outline the optimality conditions and also the procedure for performing optimization for both versions of the problem. Section \ref{sec:4.1} formulates and solves the optimal control problem for unconditional, Lindblad dynamics, deriving first a measurement schedule ensuring optimal fidelity with the solution state at a given final time in Section \ref{OC_costate}, and then formulating and solving for the measurement schedule that minimizes the time-to-solution (TTS) in Section \ref{sec:opt_time}. In Section \ref{sec:4.2} we then address the optimal control problem for conditional dynamics, using a most likely path approach. Section \ref{sec:4.2.1} derives the corresponding measurement schedule that optimizes the fidelity with the solution state over a given time interval, and Section \ref{sec:4.2.2} derives the corresponding measurement schedule that minimizes the TTS.

Before deriving the optimal measurement schedules for these four different settings, we first summarize the optimal control approach underlying our derivations. This is the Pontryagin Maximum Principle (PMP), 
which relates optimal control of time--continuous dynamics to action functionals and variational calculus. Suppose there are some state coordinates $\mathbf{q}(t)$, the system dynamics $\dot{\mathbf{q}}=\mathbf{f}(\mathbf{q},\mathbf{u})$ driven by controls $\mathbf{u}(t)$, the initial state $\mathbf{q}(T_i)=\mathbf{q}_i$, and some total cost function to be minimized
\begin{equation}
    J = J_{T_f}(\mathbf{q}(T_f),  T_f) + \int\limits_{T_i}^{T_f} dtJ_t(\mathbf{q}(t), \mathbf{u}(t)),
\end{equation}
where the terminal cost $J_{T_f}$ is a cost function at the final time of the system's controlled evolution, and the running cost $J_t$ is a dynamic contribution to the cost. One may always define an action 
\cite{pontryagin2018mathematical, PMP_tutorial, DAllessandroBook, Book_OCT_SL}
\begin{subequations}\label{PMP_traditional}\be 
\mathcal{S} = J_{T_f} + \int\limits_{T_i}^{T_f}dt\left(\boldsymbol{\Lambda}\cdot\dot{\mathbf{q}} - \boldsymbol{\Lambda}\cdot\mathbf{f} + J_t \right) = 
J_{T_f} + \int\limits_{T_i}^{T_f}dt\left(\boldsymbol{\Lambda}\cdot\dot{\mathbf{q}} - \mathbb{H} \right) \ee
\be 
\text{for}\quad \mathbb{H}(\mathbf{q},\mathbf{u},\boldsymbol{\Lambda}) \equiv \boldsymbol{\Lambda}\cdot\mathbf{f}(\mathbf{q},\mathbf{u}) - J_t(\mathbf{q},\mathbf{u}),\label{H_vec} 
\ee \end{subequations}
where the scalar $\mathbb{H}$ is the Pontryagin Hamiltonian. The costate vector $\boldsymbol{\Lambda}(t)$ here acts both as ``momenta'' conjugate to the state coordinates $\mathbf{q}$ in the usual sense of Hamiltonian mechanics, and as Lagrange multipliers constraining the system's evolution to physically--motivated equations of motion $\dot{\mathbf{q}} = \mathbf{f}$ \cite{PMP_tutorial, DAllessandroBook, Book_OCT_SL}.
The PMP  states that optimal controls $\mathbf{u}^\star(t)$ satisfy a variational principle $\delta\mathcal{S} = 0$, which yields
\be 
\dot{\mathbf{q}} = \partl{\mathbb{H}}{\boldsymbol{\Lambda}}{},
\quad 
\dot{\boldsymbol{\Lambda}} = -\partl{\mathbb{H}}{\mathbf{q}}{}, \quad
\mathbf{u}^\star(t) = \mathop{\arg\max}\limits_{\mathbf{u}(t)} \mathbb{H}(\mathbf{q}(t), \mathbf{u}(t), \boldsymbol{\Lambda}(t)).
\ee
The first Hamilton equation of motion re-states the dynamics $\dot{\mathbf{q}} = \mathbf{f}$. The second Hamilton equation, obtained after integration by parts, governs the Lagrangian multiplier $\boldsymbol{\Lambda}(t)$. Thanks to this formulation, in the last equation, we can optimize $\mathbf{u}(t)$ pointwise in time, even though its impact on $\mathbf{q}$ and $\boldsymbol{\Lambda}$ must be  accounted through forward and backward integration. We consider problems with free end points. Therefore the costate vector $\boldsymbol{\Lambda}$ also needs to satisfy a transversality condition at $t=T_f$ given by \cite{BookArnoldClassical, pontryagin2018mathematical}
\begin{equation}
    \boldsymbol{\Lambda}(T_f)=-\frac{\partial J_{T_f}}{\partial \mathbf{q}_f},
\end{equation}
where $\mathbf{q}(T_f)=\mathbf{q}_f$. 

We now discuss two ways to utilize this structure in the context of controlling an open quantum system, subject to controlled measurement or dissipation.

\subsection{Lindblad dynamics}\label{sec:4.1}
We first formulate the optimal control problem and the optimality condition for quantum systems under the unconditional continuous measurement process described by  Lindblad dynamics \eqref{eq:lindblad} \cite{Koch_2016, PhysRevApplied.16.054023, Gautier_2025}.

\subsubsection{\label{OC_costate}Optimal final state problem}
In the optimal final state problem for Lindblad dynamics, we ask  what is the optimal schedule  $\theta^{\star}(t)$ that minimizes the terminal cost function at the fixed final time $T_f$:
\begin{equation}\label{eq:terminal_cost}
    J = J_{T_f} = -\mathrm{Tr}(\hat{\rho}(T_f)\hat{\Pi}_0(\pi/2)),
\end{equation}
where $\hat{\rho}(T_f)$ is the final state evolved from a fixed initial state $\hat{\rho}_0$ according to the Zeno dragging Lindblad dynamics \eqref{eq:lindblad} for a fixed total dragging time $T_f$, and $\hat{\Pi}_0(\pi/2)$ is the projector onto solution states in the computational basis.

The PMP \cite{PMP_tutorial, pontryagin2018mathematical} gives the necessary optimality conditions that the system has to satisfy: We define a Pontryagin control Hamiltonian $\mathbb{H}_c$
\begin{equation}\label{eq:PMP_H_Lind}
\mathbb{H}_c(t) = \mathrm{Tr}\{\hat{\Lambda}(t) \;\mathcal{L}(\theta(t))[\hat{\rho}(t)]\}.
\end{equation}
We note that $\mathbb{H}_c$ is just a scalar function of time, and should not be confused with the quantum Hamiltonian driving coherent evolutions. Here $\hat{\Lambda}(t)$ is the costate operator (unlike the co-state vector in \eqref{H_vec}) and plays the role of Lagrangian multipliers. See more discussion about the role of the costate in Appendix \ref{appendix:costate}. The PMP states that the optimal schedule $\theta^\star(t)$ along with the trajectories of state $\hat{\rho}(t)$ and costate $\hat{\Lambda}(t)$ under this schedule should satisfy the following set of coupled equations
\begin{equation}\label{eq:eom_rho_lind_opt_final}
    \frac{d\hat{\rho}}{dt} = \frac{\partial \mathbb{H}_c}{\partial \hat{\Lambda}} = \mathcal{L}(\theta^{\star}(t))[\hat{\rho}], \quad \hat{\rho}(0) = \hat{\rho}_0.
\end{equation}
\begin{equation}\label{eq:eom_costate_lind_opt_final}
    \frac{d\hat{\Lambda}}{dt} = - \frac{\partial \mathbb{H}_c}{\partial \hat{\rho}} = -  \mathcal{L}(\theta^{\star}(t))[\hat{\Lambda}], \quad \hat{\Lambda}(T_f) = -\frac{\partial J_{T_f}}{\partial \hat{\rho}_{T_f}} = \hat{\Pi}_0(\pi/2)
\end{equation}
\begin{equation}\label{eq:pmp_theta_lind_opt_final}
    \theta^\star(t) =  \mathop{\arg\max}\limits_{\theta(t)} \mathbb{H}_c(\hat{\rho}(t), \hat{\Lambda}(t), \theta(t)),  \quad \forall ~t \in[0, T_f].
\end{equation} 
The final time condition on the costate operator in \eqref{eq:eom_costate_lind_opt_final} is the transversality condition. The condition arises since the final state $\hat{\rho}(T_f)$ is free. We call this problem defined above the  \textit{Lindblad optimal final state} (Lindblad-OFS) problem.

It is possible to solve \eqref{eq:eom_rho_lind_opt_final} - \eqref{eq:pmp_theta_lind_opt_final} numerically in some simplified situations, for example, when $
\theta$ is unconstrained. In this case, we can replace \eqref{eq:pmp_theta_lind_opt_final} with the following
\begin{equation}\label{eq:dH_dtheta_lindblad}
    \frac{\partial \mathbb{H}_c}{\partial \theta^{\star}} = \mathrm{Tr}\left(\hat{\Lambda} \frac{\partial \mathcal{L}(\theta^{\star}(t))[\hat{\rho}]}{\partial \theta^{\star}} \right) = 0, \quad \forall t \in[0, T_f].
\end{equation}
This allows gradient-based algorithms, such as Nesterov-GRAPE which we use in this work, to implement the numerical optimization \cite{GRAPE, Nesterov, nesterov2013introductory}. The essential idea of gradient-based algorithms is to iteratively update $\theta(t)$ by
\begin{equation}\label{eq:gradient_descent_lindblad}
    \theta_{k+1}(t) = \theta_{k}(t) + \eta\, \mathrm{Tr}\left(\hat{\Lambda} \frac{\partial \mathcal{L}(\theta_k(t))[\hat{\rho}]}{\partial \theta_k} \right),
\end{equation}
where $\eta$ is some learning rate.

\subsubsection{Optimal time problem}\label{sec:opt_time}
The most common time optimal problem deals with the following task: given the initial state $\hat{\rho}(0) = \hat{\rho}_0$ and (the projector onto) the target final state $\hat{\Pi}_0(\pi/2)$, find the minimal dragging time $T_f$ and the corresponding optimal schedule $\theta^{\star}(t)$ for $t \in [0, T_f]$ to reach  $\hat{\Pi}_0(\pi/2)$ from $\rho(0)$ \cite{PhysRevLett.96.060503}. However, in our case under Zeno dragging dynamics, the solution state is not reachable exactly in finite time. Hence we instead look for an alternative formulation of the optimal time problem.

In settings with algorithmic purposes such as quantum annealing and adiabatic quantum computation, the time-to-solution (TTS) is often the performance metric that is desired to be optimized \cite{speedup_science, speedup_prx}. In the simplest form, the TTS can be defined to be the expected total runtime in order to find the solution, which is $\frac{T_f}{\mathrm{Tr}(\hat{\Pi}_0(\pi/2) \hat{\rho}(T_f))}$ with $T_f$ being the single-shot runtime.  Inspired by this, here we propose to study an optimal time problem with the following cost function to be minimized
\begin{equation}\label{eq_optimal_time_cost}
    J = J_{T_f} = \mathrm{log}(T_f + \tau_m) -  \mathrm{log}(\mathrm{Tr}(\hat{\Pi}_0(\pi/2) \hat{\rho}(T_f))),
\end{equation}
where $\tau_m$ is some regularization constant, which avoids the trivial global minimum corresponding to $T_f \rightarrow 0$. Alternatively, it can be interpreted as the initial state preparation time and the final readout measurement time of a quantum algorithm in a single shot run \cite{zhang2024_ksat}.

The time optimal control problem now requires optimization of both $T_f$ and $\theta(t),\, t\in[0 ,T_f]$. The necessary optimality condition for the optimal control $\theta^{\star}(t)$, the state $\hat{\rho}(t)$ and costate $\hat{\Lambda}(t)$ satisfy the same set of equations as \eqref{eq:eom_rho_lind_opt_final} - \eqref{eq:pmp_theta_lind_opt_final} except that the terminal condition for $\hat{\Lambda}(T)$ is now instead $\hat{\Lambda}(T_f) = - \frac{\partial J_{T_f}}{\partial \hat{\rho}(T_f)} = \frac{\hat{\Pi}_0(\pi/2)}{\mathrm{Tr}(\hat{\Pi}_0(\pi/2) \hat{\rho}(T_f))}$. Besides, we have the additional optimality condition $\frac{dJ_{T_f}}{dT_f} = \frac{\partial J_{T_f}}{\partial T_f} - H_c(T_f) = 0$ for $T_f$, which results in
\begin{equation}
    \frac{1}{T_f + \tau_m} - \frac{\mathrm{Tr}\left\{\hat{\Pi}_0(\pi/2) \mathcal{L}(\theta^{\star}(T_f))[\hat{\rho}(T_f)]\right\}}{\mathrm{Tr}\{\hat{\Pi}_0(\pi/2) \hat{\rho}(T_f)\}} = 0.
\end{equation}
We refer to the problem defined above as \textit{Lindblad optimal time} (Lindblad-OT) problem.

\subsection{The Most Likely Path}\label{sec:4.2}

Having formulated the optimal control problems for the unconditional Lindblad dynamics, we now introduce the concepts and build a contrasting framework for optimal control problems with the conditional dynamics given by \eqref{eq:strat_SME}.
This has the advantage of working in a smaller coordinate space, by virtue of using pure states instead of mixed states, but this comes at the expense of having to manage stochastic readout parameters in the dynamics. 

For example, for the purpose of Zeno dragging, one can approach arbitrarily close to the target state by following a trajectory with arbitrarily large $r_{\alpha}(t)$, regardless of the schedule of $\theta(t)$. 
However, this infinitely fast dynamics happens with a vanishingly small probability density. On the other hand, it makes no sense to trade mixed states for pure states at the expense of following the probability of each possible readout sequence $\{r_{\alpha}(t) \}_{\alpha=1}^m$ in \eqref{eq:strat_SME}. To resolve these issues, we consider \textit{the most likely path optimal control problem} by taking into account the probability densities of the quantum trajectories in the framework of CDJ stochastic path integral, noting that, CDJ optimizations have been shown to be able to boost the probability of trajectories reaching target states ~\cite{CDJ_origin, CJ, karmakar2025cdj, lewalle2024prxq, kokaew2024}. While this will not improve the average performance under purely passive measurements -- indeed, the latter is just equal to the performance of the unconditional case, it paves the way for improving the average performance by measurement-based feedback actions.

\subsubsection{Optimal final state problem}\label{sec:4.2.1}

We consider the probability density functional $\mathcal{P}$ of following some particular path $\{\hat{\rho}(t),  r_{1}(t), r_2(t), \allowbreak \ldots, r_m(t)\}$  during the evolution from $t = 0$ to $t = T$ for the system \eqref{eq:strat_SME}. 

It is helpful to first consider this problem in discrete time, closely following the original formulation by CDJ \cite{CDJ_origin}.  
Consider a sequence of states $\lbrace \hat{\rho} \rbrace$ and readouts $\lbrace \mathbf{r} \rbrace$ at discrete time slices indexed by $j$. The joint probability density for these states and readouts goes like
\be 
\mathcal{P}(\lbrace \hat{\rho} \rbrace,\lbrace \mathbf{r} \rbrace) = \mathcal{P}(\hat{\rho}_i,\hat{\rho}_0)\mathcal{P}(\hat{\rho}_f,\hat{\rho}_N) \prod_{j=0}^{N-1}\mathcal{P}(\hat{\rho}_{j+1},\mathbf{r}_j|\hat{\rho}_j) = 
\mathcal{P}_i\,\mathcal{P}_f \prod_{j=0}^{N-1}\mathcal{P}(\hat{\rho}_{j+1}|\hat{\rho}_j,\mathbf{r}_j)\mathcal{P}(\mathbf{r}_j|\hat{\rho}_j).
\ee
Here $\mathcal{P}_i^{(0)}$ and $\mathcal{P}_f^{(N)}$ provide boundary conditions, such that e.g.~$\mathcal{P}_i^{(0)} = \delta(\hat{\rho}_0 - \hat{\rho}_i)$ denotes perfect preparation of the state $\hat{\rho}_0$ at $\hat{\rho}_i$, and $\mathcal{P}_f^{(N)}$ may similarly denote post--selection on a particular state or distribution. 
Given a state update of the form $\hat{\rho}_{j+1} = \mathcal{E}(\rho_j,\mathbf{r}_j) = \mathcal{E}_j$, such as \eqref{Born_rule}, we have a deterministic state update $\mathcal{P}(\hat{\rho}_{j+1}|\hat{\rho}_j,\mathbf{r}_j) = \delta(\hat{\rho}_{j+1} - \mathcal{E}_j)$ given the readout, and a distribution $\mathcal{P}(\mathbf{r}_j|\hat{\rho}_j)$ describing the stochastic nature of the measurement record itself.
Using the Fourier form of the $\delta$--functions, we may exponentiate the product above (introducing the co-states $\hat{\Lambda}_j$), i.e.
\be 
\mathcal{P}(\lbrace \hat{\rho} \rbrace,\lbrace \mathbf{r} \rbrace) \sim  \exp\left[\ln\,\mathcal{P}_i^{(0)} + \ln\,\mathcal{P}_f^{(N)} - \sum_{j=0}^{N-1} \mathrm{Tr}\left(\hat{\Lambda}_j\cdot (\hat{\rho}_{j+1} - \mathcal{E}_j)\right) - \ln\,\mathcal{P}(\mathbf{r}_j|\hat{\rho}_j)\right]. 
\label{CDJ-discrete}\ee

In the time--continuum limit, and assuming a Gaussian measurement apparatus, the dynamics become like \eqref{eq:strat_SME}, and the readout distribution may be approximated as well  \cite{Philippe_Thesis, Cylke_2025inprep}.
The path integral becomes
\begin{equation}
\begin{aligned}
    \mathcal{P}(\hat{\rho}, r_{1}, \cdots,  r_m) &= \int \mathcal{D}[\hat{\Lambda}] e^{-\mathcal{S}} = \int \mathcal{D}[\hat{\Lambda}] e^{\mathcal{B}-\int_0^T dt\{\mathrm{Tr}(\hat{\Lambda} \frac{d\hat{\rho}}{dt}) - \mathbb{H}_{CDJ}\}},
\end{aligned}
\end{equation}
where $\mathcal{B}$ shorthands the boundary terms. The interpretation of the path integral $\int{\mathcal{D}[\hat{\Lambda}]}$ can be made clear by parameterizing the costate $\hat{\Lambda}$ in coordinates of some basis operators. This is discussed in more detail in Appendix \ref{appendix:costate}. The stochastic action is $\mathcal{S} = \int_0^T dt \{\mathrm{Tr}(\hat{\Lambda} \frac{d\hat{\rho}}{dt}) - \mathbb{H}_{CDJ}\}$, with the CDJ-stochastic Hamiltonian being
\begin{equation}\label{eq:CDJ_H}
\begin{aligned}
    &\mathbb{H}_{CDJ}(\hat{\rho}, \hat{\Lambda}, r, \theta)= \frac{1}{m}\sum_{\alpha = 1}^m \left\{\mathrm{Tr}(\hat{\Lambda}\mathcal{F}_{\alpha}(\hat{\rho}, r_{\alpha}, \theta))- \frac{1}{2}\left[ r_\alpha - 2 \mathrm{Tr}(\hat{\rho} \hat{P}_{\alpha}(\theta)) \right]^2 - g(\hat{\rho}, \hat{P}_{\alpha}(\theta))\right\},
\end{aligned}
\end{equation}
where 
\begin{equation}
\begin{aligned}
    g(\hat{\rho}, \hat{P}_{\alpha}(\theta)) &= 2 \left( \mathrm{Tr}\left\{\hat{\rho} \hat{P}^2_{\alpha}(\theta)\right\} - \mathrm{Tr}\left\{\hat{\rho}\hat{P}_{\alpha}(\theta)\right\}^2\right) = 2 \mathrm{Var}[\hat{P}_{\alpha}(\theta)].
\end{aligned}
\end{equation}
The action $\mathcal{S}$ defined above specifies the log-probability density of any quantum trajectory. The term  $g+\frac{1}{2}[ r_\alpha - 2 \mathrm{Tr}(\hat{\rho} \hat{P}_{\alpha}(\theta)) ]^2$ represents the log-probability density of any conditional dynamics trajectory satisfying \eqref{eq:strat_SME}, and can be interpreted as a naturally arising running cost \cite{lewalle2024prxq, Cylke_2025inprep, Karmakar_2022, barchielli2009quantum}.

In short, we recognize that we again have a problem in the form \eqref{PMP_traditional} \cite{lewalle2024prxq}, where the physical dynamics are now given by \eqref{eq:strat_SME}, and the measurement statistics provide us with particular forms of the running cost $J_t$. Moreover, the final boundary condition $\ln\,\mathcal{P}_f$ can be associated with a terminal cost (such that post--selection on a narrow Gaussian may be understood to be equivalent to imposing a steep quadratic cost on landing near a particular final state).

For any given schedule $\theta(t)$,  the most likely path then corresponds to the trajectory that minimizes the stochastic action $\mathcal{S}$ with respect to the measurement results $r_{\alpha}(t)$. Since $r_{\alpha}$ is unbounded, the minimization is achieved by $\delta \mathcal{S} = 0$, which leads to a set of Hamilton equations \cite{CDJ_origin}
\begin{equation}
    \frac{d\hat{\rho}}{dt}=\frac{\partial\mathbb{H}_{CDJ}}{\partial \hat{\Lambda}}, \quad \frac{d\hat{\Lambda}}{dt}=-\frac{\partial\mathbb{H}_{CDJ}}{\partial \hat{\rho}},\quad\frac{\partial \mathbb{H}_{CDJ}}{\partial r^{\star}_{\alpha}}=0, \forall \alpha \in [m],
\end{equation}
with initial and terminal condition $\hat{\rho}(0) = \hat{\rho}_0$ and $\hat{\Lambda}({T_f}) = 0$. The trajectory specified by the above set of coupled equations is then the most-likely path, which specifies the highest probability trajectory, starting from $\hat{\rho}_0$, generated from the stochastic dynamics \eqref{eq:strat_SME} with control schedule $\theta(t)$. 

Within the context of Zeno dragging, we may consider post-selecting the final state on a region close to the target subspace $\hat{\Pi}_0(\pi/2)$ such that the probability of reaching this region is maximized by CDJ optimization. As discussed before, this can be done by specifying a boundary term
\begin{equation}
    \mathcal{B}  = \mathrm{Tr}(\hat{\rho}(T_f)\hat{\Pi}_0(\pi/2)).
\end{equation}
This is equivalent to including a terminal cost function $J_{T_f} = - \mathcal{B}$ to the total cost function $J$, which now represents the log-probability density of obtaining the post-selected states around the solution subspace at the final time $T_f$, with the dynamics driven by the schedule $\theta(t)$ and measurement records $\{ r_{\alpha}(t)\}_{\alpha=1}^m$. In other words, within the usual framework of optimal control theory, we may now wish to optimize control parameters $\theta(t)$ and $\{ r_{\alpha}(t)\}_{\alpha=1}^m$ in order to minimize a total cost function
\begin{equation}\label{eq:mlp_intro_cost}
\begin{aligned}
   J&=  \int_0 ^{T_f} dt \left\{\frac{1}{m}\sum_{\alpha=1}^m\left(\frac{1}{2}\left[ r_\alpha - 2 \mathrm{Tr}(\hat{\rho} \hat{P}_{\alpha}(\theta)) \right]^2  + g(\hat{\rho}, \hat{P}_{\alpha}(\theta))\right) \right\} + J_{T_f}(\hat{\rho}({T_f}), \theta({T_f})),
\end{aligned}
\end{equation}
where the dynamics of the state variable $\hat{\rho}$ is governed by \eqref{eq:strat_SME} with initial condition $\hat{\rho}(0) = \hat{\rho}_0$.  Viewing $\{ r_{\alpha}(t)\}_{\alpha = 1}^m$ as additional control parameters and applying PMP to the above system, we define a control Hamiltonian $\mathbb{H}_c = \mathbb{H}_{CDJ}$ same as \eqref{eq:CDJ_H}.

The optimality condition is then given by the set of equations the state $\hat{\rho}$ and the costate $\hat{\Lambda}$ have to satisfy:
\begin{equation}\label{eq:eom_rho_mlp_opt_final}
    \frac{d\hat{\rho}}{dt} = \frac{\partial \mathbb{H}_c}{\partial \hat{\Lambda}} = \mathcal{F}(\hat{\rho}, \{ r^{\star}_{\alpha}(t)\}_{\alpha = 1}^m, \theta^{\star}(t)), \quad \hat{\rho}(0) = \hat{\rho}_0,
\end{equation}

\begin{equation}\label{eq:eom_costate_mlp_opt_final}
\begin{aligned}
    \frac{d\hat{\Lambda}}{dt} = -\frac{1}{m}\sum_{\alpha = 1}^m (r^{\star}_{\alpha} -1)\left[\hat{P}_{\alpha}(\theta^{\star}(t))\hat{\Lambda} \right.& \left.+ \hat{\Lambda}  \hat{P}_{\alpha}(\theta^{\star}(t))- 2 \mathrm{Tr}\{\hat{P}_{\alpha}(\theta^{\star}(t))\hat{\rho}\}\hat{\Lambda} - 2 \left(\mathrm{Tr}\{\hat{\Lambda} \hat{\rho}\}-1\right)\hat{P}_{\alpha}(\theta^{\star}(t)) \right] ,\\
    & \hat{\Lambda}(T_f) = -\frac{\partial J_{T_f}}{\partial \hat{\rho}(T_f)} = \hat{\Pi}_0(\pi/2),
\end{aligned}
\end{equation}
where $\mathcal{F}$ is defined in \eqref{eq:strat_SME}. Further simplification can be made to formulate \eqref{eq:eom_costate_mlp_opt_final} into the negative conjugate of \eqref{eq:eom_rho_mlp_opt_final}, see Appendix \ref{appendix:Equation_reduction}. Furthermore, the optimality condition also specifies that the optimal controls $\theta^{\star}(t)$ and $\{ r^{\star}_{\alpha}(t)\}_{\alpha = 1}^m$ have to be chosen to be
\begin{equation}
    \theta^{\star}(t), \{ r^{\star}_{\alpha}(t)\}_{\alpha = 1}^m = \mathop{\arg\max}\limits_{\theta(t), \{ r_{\alpha}(t)\}_{\alpha = 1}^m} \mathbb{H}_c(\hat{\rho}, \hat{\Lambda}, \{ r_{\alpha}(t)\}_{\alpha = 1}^m, \theta(t)).
\end{equation}
As $r_\alpha(t) \in \mathbb{R}$ is in general unbounded, the above condition for optimal $\{r^{\star}_{\alpha}(t)\}_{\alpha = 1}^m$ simplifies to
\begin{equation}\label{eq:mlp_intro_r}
    \quad \frac{\partial \mathbb{H}_c}{\partial r^{\star}_{\alpha}}=0, \forall \alpha \in [m],
\end{equation}
which can be analytically solved to give \cite{karmakar2025cdj}:
\begin{equation}\label{eq:opt_readout_analytical}
\begin{aligned}
    &r^{\star}_{\alpha}(t) = 2\mathrm{Tr}(\hat{\rho} \hat{P}_{\alpha}(\theta^{\star}(t)))+  \mathrm{Tr}\left\{\hat{\Lambda}\left(\hat{\rho} \hat{P}_{\alpha}(\theta^{\star}(t)) + \hat{P}_{\alpha}(\theta^{\star}(t))\hat{\rho} - 2 \hat{\rho} \mathrm{Tr}[\hat{\rho} \hat{P}_{\alpha}(\theta^{\star}(t))]\right) \right\}. 
\end{aligned}
\end{equation}
We henceforth refer to the problem defined above through \eqref{eq:mlp_intro_cost} - \eqref{eq:mlp_intro_r} as \textit{the most likely path optimal final state} (MLP-OFS)  problem.

Just like the case in optimal final state problem with Lindblad dynamics, the above system may also be solved using gradient-based numerical algorithms like Nesterov-GRAPE that we use in this study. Similar to \eqref{eq:dH_dtheta_lindblad} and \eqref{eq:gradient_descent_lindblad}, the essential iteration step for the gradient-based algorithm takes the following form:
\begin{equation}\label{eq:gradient_descent_mlp}
\begin{aligned}
    \theta_{k+1}(t) &= \theta_{k}(t) + \eta \left[ \mathrm{Tr}\left(\hat{\Lambda} \frac{\partial \mathcal{F}(\hat{\rho}, \{r_{\alpha}(t)\}_{\alpha=1}^m, \theta_k(t))}{\partial \theta_k} \right)+ \frac{2}{m}\sum_{\alpha=1}^m(r_{\alpha}(t)-1)\mathrm{Tr}\left(\hat{\rho}\frac{\partial \hat{P}_{\alpha}(\theta_k)}{\partial \theta_k}\right) \right],
\end{aligned}
\end{equation}
where $\eta$ is again some learning rate.

\subsubsection{Optimal time problem}\label{sec:4.2.2}
Similar to the time optimal problem defined for Lindblad dynamics, we can define a time optimal problem for the most likely path by considering a cost function of the form \eqref{eq:mlp_intro_cost} with the terminal cost $J_{T_f}$ given in \eqref{eq_optimal_time_cost}. The state $\hat{\rho}$, costate $\hat{\Lambda}$, optimal readouts $\{ r^{\star}_{\alpha} \}_{\alpha =1 }^m$, and the optimal control $\theta^{\star}(t)$ still need to satisfy the necessary optimality conditions \eqref{eq:eom_rho_mlp_opt_final} - \eqref{eq:mlp_intro_r}, except that the terminal condition for $\hat{\Lambda}$ is now instead $\hat{\Lambda}(T_f) = - \frac{\partial J_{T_f}}{\partial \hat{\rho}(T_f)} = \frac{\hat{\Pi}_0(\pi/2)}{\mathrm{Tr}(\hat{\Pi}_0(\pi/2) \hat{\rho}(T_f))}$. The optimality condition $\frac{dJ}{dT_{f}} = \frac{\partial J_{T_f}}{\partial T_f} - H_c(T_f) = 0$ for $T_f$ now reads
\begin{equation}
\begin{aligned}
    0=& \frac{1}{T_f + \tau_m} + \frac{1}{m}\sum_{\alpha=1}^m \frac{1}{2}\left[ r^{\star}_\alpha(T_f) - 2 \mathrm{Tr}\left(\hat{\rho}(T_f) \hat{P}_{\alpha}(\theta^{\star}(T_f))\right) \right]^2 + g(\hat{\rho}(T_f), \hat{P}_{\alpha}(\theta^{\star}(T_f))) \\ 
    &-\frac{\mathrm{Tr}\left\{\hat{\Pi}_0(\pi/2) \; \mathcal{F}\left(\hat{\rho}(T_f),\{r^{\star}_{\alpha}(T_f) \}_{\alpha=1}^m, \theta^{\star}(T_f)\right)\right\}}{\mathrm{Tr}\{\hat{\Pi}_0(\pi/2) \hat{\rho}(T_f)\}} .
\end{aligned}
\end{equation}
We refer to the problem defined here then as \textit{the most likely path optimal time} (MLP-OT) problem. Similar to the interpretation of the most likely path final state problem, the optimal schedule we obtain from this task is a balance between minimizing TTS and the probability with which such trajectory can occur. 

We also point out that in both optimal final state problem and time optimal problem, one can in principle tune the relative weights between the running cost stemming from the CDJ stochastic path integral and the original cost one wishes to minimize in \eqref{eq:mlp_intro_cost}. In this way, one can adjust the importance of the probability density of the quantum trajectory and thus obtain a different optimal control $\theta^{\star}(t)$.

Finally, we note that both Lindblad-OFS and MLP-OFS problems are analytically solvable for single-qubit systems, and their optimal schedules are  linear functions of time.  A pedagogical derivation of these optimal control solutions can be found in Appendix \ref{appendix:analytical_single_qubit} and \cite{lewalle2024prxq}.

\section{Numerical Results of Optimal Schedule for Lindblad Dynamics and the Most Likely Path}\label{sec_numerics}

In this section, we present the numerical optimization results following the framework of optimal control theory we introduced in Section \ref{sec_OCT}. We show results for Lindblad and most likely path dynamics, with the optimization objective being the optimal final state problem and also the time-optimal problem. The results for the most likely path presented here can be viewed as a numerical extension of the methods developed in \cite{lewalle2024prxq}. To capture the essential features of the problem in a simpler setting, we primarily focus on problem instances of $2$-SAT on the ring with a single solution (defined in Appendix \ref{appendix:ksat}), which requires only $m= n+1$ clauses for $n$ qubits. We also include some numerical results for 3-SAT problems with a single solution.

\subsection{Optimal Final State Problem}\label{sec_optimal_final_state_numerics}

For optimal final state problems, we are effectively optimizing the final state fidelity with respect to the target state $f = \mathrm{Tr}[\hat{\rho}(T_f) \hat{\Pi}_0]$. We implement the optimization for both Lindblad dynamics and the most likely path using the Nesterov-GRAPE algorithm. 
The results for $n$-qubit $2$-SAT with $n=2$, defined on a ring with a single solution are shown in Fig.~\ref{fig:_opt_schedule}. The optimized schedules for both dynamics are very similar. For both dynamics, we see that the optimal schedule does not begin at $\theta = 0$ and end at $\theta = \pi/2$. 
Instead, they jump over some regions at the beginning and the end. This can be understood as the following way intuitively: From proposition \ref{proposition_mixing}, we notice that the error associated with inadequate mixing is small at small $\theta$ because the initial fidelity is close to 1. 
Moreover, as we see from the inset on Fig.~\ref{fig:_opt_schedule} (a), the running problem cost operator \eqref{eq:cost_op} has a vanishing gap at the beginning and a large gap at the end, which makes the mixing at small $\theta$ very inefficient. This means spending time here can only reduce the already small error in a very inefficient way. Therefore, the optimal schedule would tend to skip the regions near the $\theta = 0$ where the mixing is very inefficient and the fidelity is high to trade for time spent on efficient mixing later with larger $\theta$. After the spectral gap increased to a reasonably large value, the speed of the schedule starts to follow the inverse of the gap as in Theorem \ref{theorem_convergence}. We notice this is in the similar spirit of the schedule used in \cite{benjamin2017measurement}, where the schedule starts at some non-zero value $\theta(0) > 0$, and then speeds up toward the end. The instantaneous fidelity outperforming the linear schedule confirms the correctness of our calculations.

\begin{figure}[htbp]
    \centering
    \hspace*{-1.5cm}
    \begin{tikzpicture}
        \node at (0,0){\includegraphics[width=1.15\linewidth]{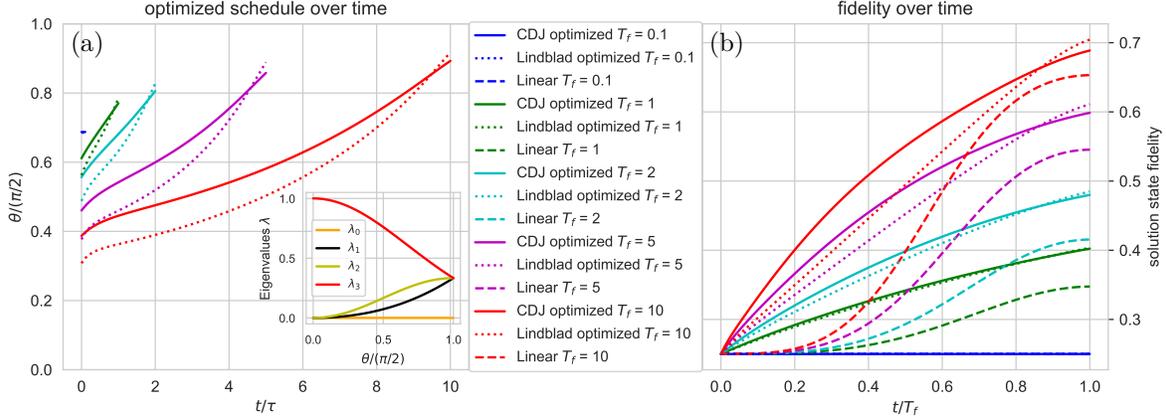}};
        \node at (-6.4, 2) {(a)};
        \node at (2.1, 2) {(b)};
    \end{tikzpicture}
    \caption{(a) Optimal schedule $\theta^{\star}(t)$ from Lindblad-OFS and MLP-OFS. The inset shows the eigenvalues of the cost operator $\hat{O}(\theta) = \frac{1}{m}\sum_\alpha \hat{P}(\theta)$ as a function of $\theta$. (b) instantaneous fidelity to the final solution, from Lindblad dynamics as a function of time, for $n = 2$ 2-SATwith various total dragging times $T_f$ and schedules optimized according to Lindblad-OFS and MLP-OFS, compared to a linear schedule. In (a), we can see that none of the schedules start at $0$ and end at $\pi/2$. Instead, the smaller the total dragging time $T_f$, the shorter the path it sweeps. We can see a monotonically increasing spectral gap between the ground state and the second largest eigenvalue of $\hat{O}(\theta)$. This is consistent with the accelerating behavior we observed in the optimized schedules.  In (b), we can see that the fidelity of the optimized schedule is always better than that under the linear schedule. We also see that fidelity from the Lindblad-OFS optimized schedule outperforms that from the MLP-OFS optimized schedule, as expected when running it on Lindblad dynamics.} 
    \label{fig:_opt_schedule}
\end{figure}

We note that the convergence bound presented in Theorem \ref{theorem_convergence} is generally far from tight, since that derivation follows a non-optimized linear schedule, completely agnostic of the problem instance. Any additional information should allow one to obtain more efficient protocols. Motivated by this, the analysis in this section considers the opposite bound, namely to optimally schedule the Zeno dragging in the hypothetical situation that the solution to the SAT problem is fully known. Note that this is still less extreme than directly rotating each qubit to the solution state, which one could, in principle, also do in this situation. Our rationale for considering this hypothetical optimization situation is to make use of both bounds, the one from the agnostic situation and the one from the full knowledge situation, to extract guidelines and intuition for designing good schedules for realistic settings.

In Fig.~\ref{fig:_opt_schedule} (b), we calculated the fidelity output from the Lindblad dynamics with schedules generated from solutions of the Lindblad-OFS problem as well as the MLP-OFS problem. We see that the fidelities
generated from both versions of the optimal final state problems outperform the fidelity from the linear schedule $\theta(t) = \frac{t}{T_f} \cdot \frac{\pi}{2}$. Meanwhile, the fidelity with  the schedule from Lindblad-OFS is better than that from MLP-OFS as expected, because we are considering the average final state fidelity, which is the final output of Lindblad dynamics.

\begin{figure}[tbhp]
    \centering
        \hspace*{-1.2cm}
        \includegraphics[width=1.1\linewidth]{fidelity_distribution.pdf}

\caption{Statistics of the final state fidelity for $n=2$ qubit 2-SAT} under conditional clause measurement dynamics \eqref{eq:Kraus}. The results for schedules generated from both Lindblad-OFS and MLP-OFS are shown. We see that the fidelities from the Lindblad-OFS optimized schedule are more evenly distributed, while the fidelities from MLP-OFS are more concentrated around high and low values. The mean of the Lindblad-OFS optimized schedule is higher than that of the MLP-OFS. However, assuming that we could implement some filter equivalent to post-selecting only trajectories above the cutoff $f_{thre} = 0.05$, the resulting new mean of the MLP-OFS optimized schedule is significantly higher than that of the Lindblad-OFS.
\label{fig:histogram}

\end{figure}

However, schedules from the MLP-OFS problem can be better than schedules from  the Lindblad-OFS in slightly modified but practically important situations. Recall that MLP-OFS effectively optimizes the probability of reaching states close to the target \cite{kokaew2024, karmakar2025cdj}. To confirm this intuition, we generated the full distribution of final solution state fidelities, depending on the output of stochastic clause measurement dynamics described by \eqref{eq:Kraus}, instead of just the average dynamics by Lindblad \eqref{eq:lindblad}, for a two-qubit $2$-SAT problem with total dragging time $T_f = \tau$. As shown in Fig.~\ref{fig:histogram}, although the Lindblad-OFS fidelity distribution has  higher mean, the MLP-OFS fidelity distribution exhibits higher variance: more weight near 1 (success) and near 0 (failure). When one has access to the measurement signals, these close-to-zero trajectories are very easy to detect, because they typically have evolved into the undesired subspaces and thus have small overlap with the solution subspace. As studied in \cite{zhang2024_ksat}, one can use a filtering technique to detect the occurrence of such trajectories in the undesired subspaces, and truncate such dynamics early, resulting in effective post-selection.  One of the simplest forms of such filtering can be implemented by time-integrating the readouts using a boxcar or exponential window function, and the trajectory is truncated or discarded when the filtered readouts pass some threshold values. A similar approach has also been used in continuous quantum error correction, where instead of truncating the trajectories, one performs error correcting operations as feedback when a threshold is reached by the filtered signal \cite{CQEC_Juan,PhysRevA.95.032317}. More generally, the MLP-OFS or similar settings would typically be advantageous in the presence of measurement-based feedback, where low fidelity trajectories could be effectively countered. We leave a full study of feedback strategies for future work and here consider the simple feedback action of effectively cutting short the trajectories once the fidelity is consistently lower than $f_{thre} = 0.05$. While we leave open the details of its implementation, and the correspondence should be taken with a grain of salt, the mechanism in practice should amount to cutting short those trajectories where, after a negligible amount of time, one of the clause measurements has conclusively converged to clause failure. As shown in Fig. \ref{fig:histogram}, the mean of MLP-OFS fidelity after post-selection is now higher than that of Lindblad-OFS. This demonstrates the potential superiority of MLP-OFS optimization for heralded dynamics where one can extract extra information from the measurement signals compared to unheralded Lindblad dynamics.

\begin{figure}[tbhp]
    \centering
    \includegraphics[width=\linewidth]{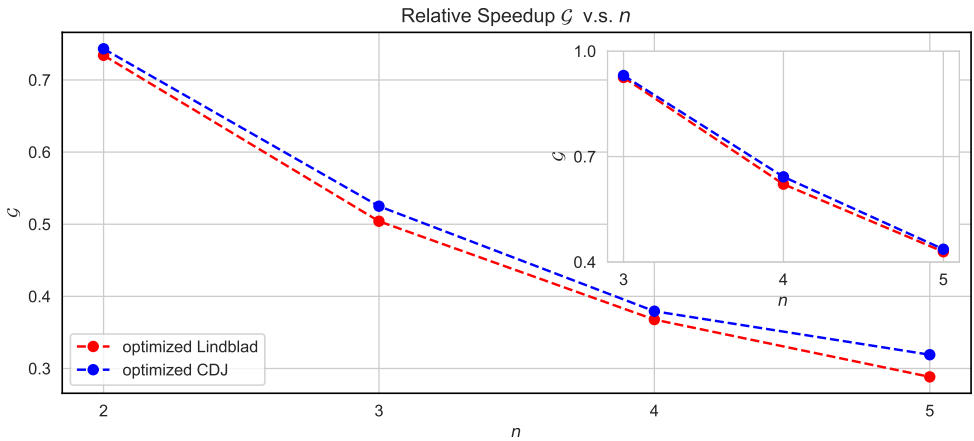}
    \caption{Relative speedup $\mathcal{G}$ (defined in \eqref{eq:relative_gain}) as a function of the number of qubits $n$ for single solution $2$-SAT problems on a ring. The inset shows the same calculation for 3-SAT with a single solution, where for $n=3$ there are $m=7$ clauses, and for $n=4$ and $n=5$ there are $m=\lceil4.26n\rceil$ clauses.  Note, $m=\lceil4.26n\rceil$ marks the transition from solvable 3-SAT instances to mostly unsolvable instances \cite{benjamin2017measurement,achlioptas_rigorous_2005}. The corresponding $\text{TTS}^{opt}$ of optimized Lindblad and optimized CDJ are obtained from the solutions of Lindblad-OT and MLP-OT problems respectively. The optimal TTS of  the linear schedule is obtained from the Lindblad optimal time problem without scheduling optimization, i.e., only single-shot dragging time $T_f$ is optimized. For the other case both $T_f$ and the corresponding schedules $\theta(t)$ are optimized for Lindblad and CDJ respectively. All of the TTS are then calculated from the final states of Lindblad evolutions following their optimal $T_f$ and corresponding schedules $\theta(t)$.}
    \label{fig:time_opt}
\end{figure}

\subsection{Time Optimal Problem}

With these optimized schedules, we are able to obtain the solution to the optimal time problem. The objective we would like to optimize is the (logarithm of) the $\mathrm{TTS} = (T_f + \tau_m)/\mathrm{Tr}[\rho(T_f) \hat{\Pi}_0]$. Here we assume $\tau_m = 5\tau$ for 2-SAT and $\tau_m = 2\tau$ for 3-SAT, which makes sense as the final readout time in the canonical basis needs to be several times larger than $\tau$ in order to obtain accurate information \cite{zhang2024_ksat}. We numerically solve Lindblad-OT and MLP-OT problems by implementing the single-shot dragging time optimizations with schedules from solutions of Lindblad-OFS and MLP-OFS problems defined in Section \ref{sec_OCT}. We also implement $T_f$ optimization for Lindblad dynamics with linear schedule $\theta(t) = \frac{\pi}{2}\cdot \frac{t}{T_f}$, i.e. no schedule optimization. All of the optimal TTS are then obtained from \textit{final states of Lindblad evolutions} with optimal dragging time $T_f$ and corresponding schedules $\theta(t)$. This makes sure the optimal TTS is the expected runtime in order to obtain the solution state. To illustrate the improvement of TTS thanks to schedule optimizations, we define the relative speedup
\begin{equation}\label{eq:relative_gain}
    \mathcal{G} = \frac{ \mathrm{TTS}_{opt}}{\mathrm{TTS}_{linear}},
\end{equation}
where $\mathrm{TTS}_{linear}$ is the optimal TTS obtained from linear schedule, while $\mathrm{TTS}_{opt}$ stands for the optimal TTS obtained from either optimized Lindblad schedules or optimized CDJ schedules. The numerical results for $\mathcal{G}$ with varying qubit number $n$ are shown in Fig.~\ref{fig:time_opt}. We see that both Lindblad and most likely path scheduling optimization can reduce the optimal TTS from the linear schedule significantly, as $\mathcal{G} < 1$. In addition, Lindblad optimization admits better relative speedup than most likely path optimization as expected. However, with the similar intuition developed in Section \ref{sec_optimal_final_state_numerics}, we expect the relative speedup from the most likely path optimization can outperform the relative speedup from Lindblad optimization if the measurement signals are available to produce effective post-selection or other appropriate feedback actions.

\subsection{Clause-Wise Schedule Optimization}
One natural question regarding the schedule optimization is: can we further improve the performance by allowing each qubit $i$ to be associated with its individual control $\theta_i$? Our preliminary investigation indicates that the answer could depend on the problem instance. For example, for Zeno dragging with $2$-SAT problems, we found that $2$-SAT on a ring always admits an optimal schedule such that $\theta_1(t) = \cdots = \theta_n(t)$, i.e.~all qubits should move together. 
This is not surprising due to the symmetry of these problem instances. 
However, for some other problem instances, we found that the optimal schedule in general does not need to be moving together. 
This is illustrated in Fig.~\ref{fig:nq_3_multi_l2_distance}. 
It will be interesting to further explore the relationship between the form of optimal schedules and the structure of the SAT instances, which we leave for future work.

\subsection{Summary of numerical results}

The numerical results presented above show that the qualitative behavior of the solutions for unconditioned (Lindblad) dynamics obtained with the optimized schedules is consistent with the convergence results for unconditioned dynamics presented in Section \ref{sec_convergence}. Specifically, we see that the measurement axis can move faster towards the solution in the region having a higher spectral gap of $\hat{O}(\theta)$ (Fig.~\ref{fig:_opt_schedule}), in agreement with the prediction of Theorem \ref{theorem_convergence}. The numerical examples also show the potential advantage of the most-likely-path-based schedule optimization when given access to heralded individual trajectories (Fig.~\ref{fig:histogram} and Fig.~\ref{fig:time_opt}). This paves the way for future work including feedback strategies based on the Zeno measurement results.

\section{Summary and Discussion}\label{sec_discussion}

\begin{figure}[t]
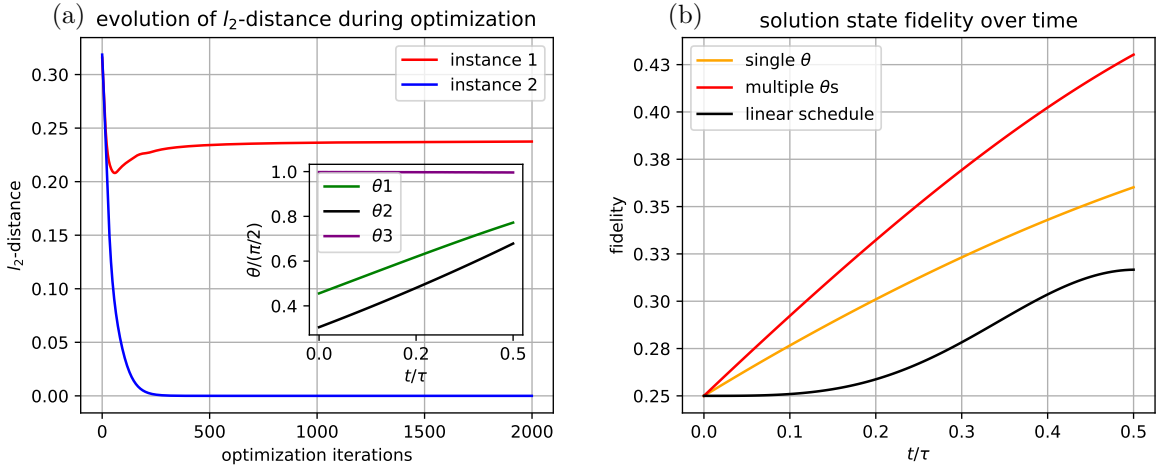

    \centering

    \begin{tikzpicture}
        \node at (-5,0) {\includegraphics[width=0.52\linewidth]{l2_distance_embedded.pdf}};
        \node at (3,0) {\includegraphics[width=0.52\linewidth]{nq_3_multi_control_fidelity_Lindblad.pdf}};
        \node at (-8.2,2.75) {(a)};
        \node at (0, 2.75) {(b)};
    \end{tikzpicture}

    \caption{Examples of the optimal schedules when allowing a separate schedule $\theta_i(t)$ for each qubit, for 3-qubit $2$-SAT instances. Here we study two illustrating instances. Instance $1$ is $(\bar{b}_1 \vee b_3 ) \wedge (\bar{b}_2 \vee b_3) \wedge (\bar{b}_1 \vee \bar{b}_3 ) \wedge (b_1 \vee \bar{b}_2 )$, whose solutions are $b_1 = \mathrm{F}$, $b_2 = \mathrm{F}$ and $b_3 = \mathrm{F}$ or $b_1 = \mathrm{T}$, $b_2 = \mathrm{T}$ and $b_3 = \mathrm{F}$; Instance 2 is $(b_1 \vee \bar{b}_2 ) \wedge (b_2 \vee \bar{b}_3) \wedge (\bar{b}_1 \vee b_3 )$, whose solutions are $b_1 = b_2 = b_3 = \mathrm{F}$ or $b_1 = b_2 = b_3 = \mathrm{T}$. Thus for instance 1, the third variable $b_3$ can be any value and is not correlated to $b_1$ and $b_2$, while for instance 2, $b_1$, $b_2$ and $b_3$ are all correlated. In order to characterize the form of the optimal solution, we have defined the $l_2$-distance between the schedules of the different qubits as $\bar{d}_{l_2} = \frac{1}{n(n-1)}\sum_{i\neq j \in [n]}{||\theta_i(t) - \theta_j(t)||_{l_2}}$, where $||f(t)||^2_{l_2} = \int_0^T dt f^2(t)$.  Panel (a) shows the evolution of $\bar{d}_{l_2}$ during the iterations of the Nesterov-GRAPE optimization algorithm. We see that $\bar{d}_{l_2}$ for instance 2 converges to zero indicating all optimal $\theta_i$ are moving together; for instance 1, $\bar{d}_{l_2}$ converges to a non-zero value indicating optimal $\theta_i$ are not moving together. The embedded plot in (a) is the optimized schedule for instance 1, confirming the above claim. Panel (b) shows the solution state fidelity for instance 1 as a function of time for the optimized schedule with 3 $\theta_i$, the optimized schedule with a single $\theta$, and the un-optimized linear schedule with a single $\theta$. We see that the schedule with multiple $\theta_i$ performs the best, which confirms the correctness of our calculations.}
    \label{fig:nq_3_multi_l2_distance}
\end{figure}

In this paper, we have studied the dynamics of multi-channel Zeno dragging under generalized measurements with varying strength; such schemes necessarily rely on measuring non-commuting observables weakly and simultaneously. 
We have proven the convergence condition as a form of adiabatic theorem for measurement-driven dynamics, showing how the spectral gap $G(\theta)$ of the cost operator $\hat{O}(\theta)$
characterizes the speed of modulation of $\theta(t)$ in order to remain in the desired subspace. 
The results also explain the empirical behavior observed in previous work introducing the weak measurement driven approach to solving $k$-SAT problems \cite{zhang2024_ksat}. 
We have further introduced a framework for schedule optimization for both the unconditional dynamics setting and the conditional dynamics setting from the perspective of optimal control theory, based on ideas in \cite{lewalle2024prxq}. 
This optimized dynamics effectively sets the lower bound for the convergence time: indeed, to be practical, any strategy for constructing the dragging schedule must moreover have lower complexity than solving the $k$-SAT problem itself.

We obtain numerical results that both confirm the intuition implicit in our theoretical convergence bounds and demonstrate that scheduling optimization can significantly improve the performance of finite--time Zeno dragging.

A potential strategy to improve Zeno dragging is feedback, where we apply additional unitaries conditioned on the (history of) measurement records. 
Quantum feedback control has been used to stabilize target states, speedup purification, achieve non-Hermitian dynamics, design quantum algorithms, and construct time--continuous error correction protocols \cite{feedback_review, CQE1, Ahn_2003, Ahn_2004, Sarovar_2004, vanhandel2005optimal, Oreshkov_2007, wiseman2009quantum,  Mascarenhas_2010, CQEC_Juan, Mohseninia2020alwaysquantumerror, Atalaya_2021_CQEC, Convy2022logarithmicbayesian, Livingston_2022, Convy_2022, Magann_prl, rapid_purification, karmakar2025noisecancelingquantumfeedbacknonhermitian}. 
As has been demonstrated in both classical and quantum algorithmic applications, feedback can be crucial to achieve computational efficacy \cite{Schoning,quantum_schoning, benjamin2017measurement}. 
A valuable future direction is then applying ideas of feedback to Zeno dragging, such that the probability of staying in the desired subspace can be improved.

We note that at this stage our primary focus is on analyzing and characterizing the open-loop performance of the Zeno dragging approach to combinatorial optimization. Before comparing the performance of the Zeno dragging approach with established state-of-the-art optimization algorithms using digital quantum circuits, it is necessary to add local feedback correction strategies, which constitute a natural extension of continuous measurement quantum protocols \cite{feedback_review} and is also done in classical algorithms like that of Schöning \cite{Schoning}. It will also be useful to explore integration of the specific advantages of the Zeno dragging approach with continuous quantum error correction \cite{SHG_prl, blumenthal2022demonstration, CQE1, Atalaya_2021_CQEC, Livingston_2022, Convy_2022}.

We also comment on the applicability of the optimization framework, which in general needs numerical calculations. The computational expense of optimization tailored to specific problem instances will in general limit its practicality for larger systems. 
Indeed, solving these optimization problems is often even harder than the classical problem encoded in the Zeno dragging protocol to begin with. More practical schemes could follow two routes. A first idea would be to extract insights from low-dimensional optimal control results, in order to deduce guidelines for efficiently building sub-optimal but significantly improved schedule on actually interesting problems. Future work along these lines comparing $k$-SAT problem structure to the resulting optimal controls on small problems may be insightful.
A second idea could be to use our analytical results from Theorem \ref{theorem_convergence} as a reduced dynamical system, if we can efficiently estimate (useful bounds on) the spectral gap as a function of $\theta$ or of the $\theta_i$; following the line similar to \cite{Brachistochrone}, this would allow to optimize schedules with lower accuracy but at a much lower cost.
We remark that a similar approach has been applied in the projective version of the BZF $k$-SAT algorithm \cite{benjamin2017measurement}.
 
Finally, we discuss possible strategies to deal with noise in the circuit realization of Zeno dragging with generalized measurement, as shown in Fig.~\ref{fig:collision_1Q}. 
Our consideration of Zeno dragging is so far restricted to the noiseless case, which is only idealized. 
In reality, many forms of errors would occur in the device, such as readout noise during the measurements and stochastic noise on the gates. 
However, due to the nature of QZE, one could expect natural robustness of Zeno dragging protocols against certain experimental imperfections, with an intuition similar to that underpinning autonomous quantum error correction (AQEC) \cite{PhysRevA.90.062344, mirrahimi2014dynamically, Lihm_2018, Guillaud_2019, lebreuilly2021autonomous, gertler_protecting_2021, xu_autonomous_2023, lewalle2024prxq}. The basic idea of AQEC is that always--on dissipation can be engineered to reduce the rates of certain logical errors occurring, e.g.~by monitoring or dissipating stabilizer observables; this is closely connected to continuous QEC. In addition, since the QZE scheme by definition requires many rounds of mid-circuit measurements, it could support error-correcting feedback actions and measurement-based error mitigation in a natural way \cite{QEM_RMP, ren2025error, PhysRevResearch_meta_shadow, self2024protecting, hu2025demonstration}. We consequently believe that there are many avenues for future work developing the ideas in this paper, and incorporating them into diverse applications across the quantum information science ecosystem.

\begin{acknowledgements}
    This work has been supported by the U.S. Department of Energy, Office of Science, National Quantum Information Science Research Centers, Quantum Systems Accelerator.
    AS acknowledges the financial support of Plan France 2030 through the project ANR-22-PETQ-0006.
    The authors greatly thank Pierre Rouchon and Artem Mamichev for helpful discussions, and are grateful to Torin Stetina for early conversations related to this work.
    PL is grateful to the UMass Lowell department of Physics $\&$ Applied Physics for their hospitality during part of this manuscript’s preparation. KBW acknowledges discussions at the Simons Institute for the Theory of Computing in Berkeley that stimulated the development of this work. 
\end{acknowledgements}

\appendix

\section*{Appendix}
\addcontentsline{toc}{section}{Appendix}

\section{A Brief Introduction to Boolean Satisfiability Problems}\label{appendix:ksat}
Here we briefly review the Boolean satisfiability problems (SAT), and describe the problem instances we used in this work. A SAT problem is specified by $n$ Boolean variables $\{b_i \}_{i=1}^n$, each of which can take values of either `TRUE' or `FALSE' (henceforth T or F), and $m$ clauses $\{C_i\}_{i=1}^m$. Boolean variables appear in clauses in the form of literals $x_i \in \{b_i, \bar{b}_i \}$, where $\bar{b}_j$ is the negation of $b_j$. If every clause contains $k$ Boolean variables, then it is a $k$-SAT problem. A clause is evaluated to be T if at least one of the literals involved is T. Thus, a clause $C_i$ in $k$-SAT can be written as literals connected by logical `OR', also denoted '$\vee$'; for example, with $k=3$ we can have
\begin{equation}
    C_i = b_2 \vee b_3 \vee \bar{b}_4,
\end{equation}
which returns T if $b_2 = \text{T}$ or $b_3 = \text{T}$ or $b_4 = \text{F}$. A $k$-SAT problem asks if there exists an assignment $\vec{b} \in \{\text{T}, \text{F} \}^n$ such that the Boolean formulae in the conjunctive normal form (CNF)
\begin{equation}
    F_{CNF} = C_1 \wedge C_2 \wedge \cdots \wedge C_m
\end{equation}
is evaluated to be T, i.e., if all the clauses can be satisfied. In other words, $F_{CNF}$ involves the logical `AND' of all the clauses. This definition of $k$-SAT makes a clear one-to-one mapping between each clause $C_\alpha$ in  $k$-SAT and the projector $\hat{P}_\alpha(\theta)$ defined in \eqref{eq:projector}, where each literal $l_{\alpha_i}$ is equivalently defined to take values $\{-1, +1 \}$ depending on whether the Boolean variable appears in the negated form (-1) or the positive form (+1).

For $k=2$, a $2$-SAT problem can be efficiently solved in linear time. For $k\geq 3$, $k$-SAT becomes NP-complete and is not expected to be solved efficiently \cite{Cook_1971, Karp_1972, Levin_1973}, neither on classical nor on quantum computers. 
Despite this computational complexity gap, we take $2$-SAT as the demonstrating example in this work, as it keeps the problem simple to describe and carries the essential structure when solved by Zeno dragging algorithms. 
In several examples, we will further focus on the $2$-SAT problems ``defined on a ring'', i.e.~whose Boolean formula takes the following form:
\begin{equation}
    F_{CNF} = (b_1 \vee \bar{b}_2) \wedge (b_2 \vee \bar{b}_3) \wedge \cdots  \wedge (b_{n-1} \vee \bar{b}_n) \wedge (b_n \vee \bar{b}_1).
\end{equation}
This family of problems has exactly 2 solutions, $b_1 = b_2 = \cdots = b_n = T$ or $b_1 = b_2 = \cdots = b_n = F$. The symmetry of this setting under index shift, $i\mapsto i+k \; \mathrm{modulo}(n)$ for all $i$, allows for somewhat further analysis. Alternatively, we can add another single clause, e.g. $C_* = (b_1 \vee b_2)$ to $F_{CNF}$ such that one of the solutions is excluded, resulting in a $2$-SAT instance with a single solution.

\section{Analysis of Lindbladian Zeno dragging}

We here provide the technical proof of the results in Section \ref{sec_convergence}. We will assume the quantum system consists of $n$ qubits, and the cost operator $\hat{O}(\theta)$ admits a spectral decomposition $\hat{O}(\theta) = \sum_{i = 0} ^{2^n-1} \lambda_i |\lambda_i \rangle \langle \lambda_i |$. We further assume that $\hat{O}(\theta)$ has at least one 0-eigenvalue ground state, associated to $\lambda_0=0$. We recall that $\hat{\Pi}_0$ denotes the orthogonal projector onto the 0-eigenspace and thus $\mathrm{Tr}[\hat{\Pi}_0]$ denotes its dimension. We assume that this dimension does not depend on $\theta$. This is the case, for instance, as soon as $\theta_i > 0$ for the $k$-SAT setting \eqref{eq:projector}.

\subsection{Lower bound of the solution state fidelity}\label{appendix:Lower bound of the solution state fidelity}

We first show how the measurement channel retains the fidelity and how the coherence between the ground eigenspace (eigenvalue 0) and the other eigenstates of $\hat{O}(\theta)$ decays under the action of $\mathcal{T}(\theta)$. In particular, we have the following lemma.

\begin{restatable}[]{lemma}{applemma}\label{lemma_decomposition}
Let a frustration-free operator be $\hat{O}= \frac{1}{m}\sum_{\alpha}\hat{P}_{\alpha}$, where each $\hat{P}_{\alpha}$ is some projector. Denote the eigenvalues and eigenstates of $\hat{O}$ as $0 \leq \lambda_i \leq 1$ and $|\lambda_i\rangle$ respectively for $i=0,1,...,\, 2^n-1$. 
Write any density matrix as 
    $$\hat{\rho} = f \, \hat{\rho}_0 + (1-f) \hat{\rho}_{\perp} + \gamma \, (\hat{c}+\hat{c}^\dagger),$$ 
where $\hat{\rho}_0$ has support on the 0-eigenvalue subspace of $\hat{O}$ i.e.~$\hat{\Pi}_0\, \hat{\rho}_0\, \hat{\Pi}_0  = \hat{\rho}_0$; $\hat{\rho}_{\perp}$ has support on the orthogonal subspace i.e.~$\hat{\Pi}_0\, \hat{\rho}_{\perp} = \hat{\rho}_{\perp} \hat{\Pi}_0 = 0$; and $\hat{c}$ denotes coherences between these two spaces, i.e.~$\hat{\Pi}_0 \hat{c} = \hat{c}$, $\hat{c} \hat{\Pi}_0 = 0$, with $\gamma \in \mathbb{R}_{\geq 0}$. Note that for positivity of $\hat{\rho}$, when  $\tr{\hat{\Pi}_0}>1$ the image of $\hat{c}$ must lie in the span of $\hat{\rho}_0$. More precisely, denoting $\hat{\Pi}_0^{(r)}$ the projector onto the span of $\hat{\rho}_0$, thus with $\hat{\Pi}_0^{(r)} \hat{\Pi}_0 = \hat{\Pi}_0 \hat{\Pi}_0^{(r)} = \hat{\Pi}_0^{(r)}$ and $\hat{\rho}_0^{(pi)}$ the pseudo-inverse satisfying $\hat{\rho}_0^{(pi)} \hat{\rho}_0 = \hat{\rho}_0 \hat{\rho}_0^{(pi)} = \hat{\Pi}_0^{(r)}$, we also have $\hat{\Pi}_0^{(r)} \hat{c} = \hat{c}$ and positivity requires $f(1-f) \geq \gamma^2 \, \mathrm{Tr}[\hat{c}^\dagger\hat{\rho}_0^{(pi)} \hat{c}]$.

Then after the action of the quantum channel $\mathcal{T}$, the resulting state is 
    $$\mathcal{T}(\hat{\rho}) = f \, \hat{\rho}_0 + (1-f) \, \hat{\rho}^{\prime}_{\perp} + \gamma^\prime (\hat{c}^\prime + \hat{c}^{\prime \dagger}) \, ,$$
    where $\hat{\rho}^{\prime}_{\perp}$ is some density matrix
    with support only in the subspace orthogonal to $\hat{\Pi}_0$ and $\hat{c}'$ describes coherences between that subspace and the subspace spanned by $\hat{\Pi}_0$. The latter satisfies $\mathrm{Tr}[\hat{c}^{\prime \dagger} \hat{\rho}_0^{(pi)} \hat{c}^{\prime}] =  \mathrm{Tr}[\hat{c}^\dagger \hat{\rho}_0^{(pi)} \hat{c}]$ with nonnegative $\gamma^\prime \leq (1-\beta\, G)\,\gamma$,
    where $G=\min_i \{\lambda_i > 0 \}$ is the gap of operator $\hat{O}$.
\end{restatable}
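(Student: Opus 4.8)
The plan is to exploit the fact that the $0$-eigenspace of $\hat{O}$ is precisely the common kernel of all the projectors $\hat{P}_\alpha$: since each $\hat{P}_\alpha \geq 0$ and $\hat{O} = \frac{1}{m}\sum_\alpha \hat{P}_\alpha$, a state is annihilated by $\hat{O}$ iff it is annihilated by every $\hat{P}_\alpha$. Hence $\hat{P}_\alpha \hat{\Pi}_0 = \hat{\Pi}_0 \hat{P}_\alpha = 0$ for all $\alpha$, so each $\hat{P}_\alpha = \hat{\Pi}_\perp \hat{P}_\alpha \hat{\Pi}_\perp$ is supported entirely on $\hat{\Pi}_\perp \equiv \hat{\openone} - \hat{\Pi}_0$. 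Writing $\hat{\rho} = f\hat{\rho}_0 + (1-f)\hat{\rho}_\perp + \gamma(\hat{c} + \hat{c}^\dagger)$ in the block form associated with the splitting $\hat{\Pi}_0 \oplus \hat{\Pi}_\perp$, I would first observe, by direct substitution, that because every $\hat{P}_\alpha$ lives in the $\perp$ block the channel $\mathcal{T}$ acts block-diagonally: it maps the $0$–$0$ block, the $\perp$–$\perp$ block, and the coherence block each to itself, with no cross-talk.

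I would then track each block explicitly. For the $0$–$0$ block, every term $\hat{P}_\alpha \hat{\rho}_0 \hat{P}_\alpha$, $\hat{P}_\alpha \hat{\rho}_0$ and $\hat{\rho}_0 \hat{P}_\alpha$ vanishes because $\hat{P}_\alpha \hat{\Pi}_0 = 0$; hence $f\hat{\rho}_0$ is left invariant, which is the origin of the unchanged prefactor $f$. For the $\perp$–$\perp$ block, $\mathcal{T}$ restricts to a trace-preserving completely positive map on the $\perp$ subspace (its dissipator is built from the $\hat{P}_\alpha$ restricted to $\perp$); since $\mathcal{T}$ is trace-preserving overall and the $0$–$0$ trace is pinned at $f$, the $\perp$–$\perp$ trace stays $1-f$, and positivity of $\mathcal{T}(\hat{\rho})$ (a compression of a PSD operator is PSD) forces this block to be a genuine density operator $(1-f)\hat{\rho}'_\perp$. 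The only nontrivial piece is the coherence block: using $\hat{P}_\alpha \hat{c} = (\hat{P}_\alpha\hat{\Pi}_0) \hat{c} = 0$, the sandwiched and left-multiplied terms drop out, leaving $\gamma\hat{c} \mapsto \gamma\hat{c}\bigl(\hat{\openone} - \tfrac{\beta}{m}\sum_\alpha \hat{P}_\alpha\bigr) = \gamma\,\hat{c}\,(\hat{\openone} - \beta\hat{O})$, with the conjugate block following by Hermiticity.

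To extract $\gamma'$ and $\hat{c}'$ I would fix the normalization implicit in the positivity constraint by equipping the coherence operators with the norm $\|\hat{a}\|_{pi}^2 = \mathrm{Tr}[\hat{a}^\dagger \hat{\rho}_0^{(pi)}\hat{a}]$. Since right-multiplication by $(\hat{\openone}-\beta\hat{O})$ does not touch the left (image) index, the image of $\hat{c}(\hat{\openone}-\beta\hat{O})$ still lies in the range of $\hat{\rho}_0$, so I may define $\hat{c}'$ as this operator renormalized to $\|\hat{c}'\|_{pi} = \|\hat{c}\|_{pi}$ and set $\gamma' = \gamma\,\|\hat{c}(\hat{\openone}-\beta\hat{O})\|_{pi}/\|\hat{c}\|_{pi}$; this yields $\mathcal{T}(\hat{\rho}) = f\hat{\rho}_0 + (1-f)\hat{\rho}'_\perp + \gamma'(\hat{c}'+\hat{c}'^\dagger)$ with $\mathrm{Tr}[\hat{c}'^\dagger\hat{\rho}_0^{(pi)}\hat{c}'] = \mathrm{Tr}[\hat{c}^\dagger\hat{\rho}_0^{(pi)}\hat{c}]$ as claimed. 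The contraction bound follows from $\|\hat{c}(\hat{\openone}-\beta\hat{O})\|_{pi}^2 = \mathrm{Tr}[\hat{X}(\hat{\openone}-\beta\hat{O})^2]$ with $\hat{X} \equiv \hat{c}^\dagger\hat{\rho}_0^{(pi)}\hat{c} \geq 0$ supported on $\perp$. There $\hat{O}$ has all eigenvalues $\geq G$, and with $0 \leq \beta\lambda_i \leq 1$ one has $\hat{\Pi}_\perp(\hat{\openone}-\beta\hat{O})^2\hat{\Pi}_\perp \leq (1-\beta G)^2\,\hat{\Pi}_\perp$; combined with $\hat{X}\geq 0$ this gives $\mathrm{Tr}[\hat{X}(\hat{\openone}-\beta\hat{O})^2] \leq (1-\beta G)^2\,\mathrm{Tr}[\hat{X}]$, i.e.\ $\gamma' \leq (1-\beta G)\gamma$.

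The main obstacle is not any single computation but the normalization bookkeeping through the pseudo-inverse $\hat{\rho}_0^{(pi)}$ when the groundspace is degenerate: one must verify that the image of the coherence remains inside the range of $\hat{\rho}_0$ (so that the $pi$-norm is the correct invariant and the decomposition stays consistent with positivity), and recognize that the decay factor $(\hat{\openone}-\beta\hat{O})$ acts on the $\perp$-index from the right, which is exactly why the relevant operator $\hat{X}$ is supported where $\hat{O}\geq G$. Once this is set up, the gap enters cleanly through the operator inequality on $(\hat{\openone}-\beta\hat{O})^2$.
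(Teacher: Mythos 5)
Your proof is correct and follows essentially the same route as the paper's: the channel acts block-diagonally with respect to the $\hat{\Pi}_0\oplus\hat{\Pi}_\perp$ splitting, the groundspace block is fixed, the coherence block contracts as $\hat{c}\mapsto\hat{c}\,(\hat{\openone}-\beta\hat{O})$, and the decay factor $(1-\beta G)$ is extracted from the norm $\mathrm{Tr}[\hat{c}^\dagger\hat{\rho}_0^{(pi)}\hat{c}\,(\hat{\openone}-\beta\hat{O})^2]$. The only (cosmetic) differences are that you justify $\hat{P}_\alpha\hat{\Pi}_0=0$ explicitly via frustration-freeness and phrase the final bound as an operator inequality on $\hat{\Pi}_\perp(\hat{\openone}-\beta\hat{O})^2\hat{\Pi}_\perp$, where the paper expands the trace in the eigenbasis of $\hat{O}$; these are equivalent.
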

\begin{proof}
    To prove the expression, we look at how quantum channel $\mathcal{T}$ acts on each of the three terms. 
    
    Firstly, since $\hat{P}_{\alpha} \hat{\Pi}_0  = \hat{\Pi}_0 \hat{P}_{\alpha} = 0, \, \forall \alpha$ and $\hat{\Pi}_0\, \hat{\rho}_0\, = \hat{\rho}_0\, \hat{\Pi}_0 = \hat{\rho}_0$, we have
    \begin{eqnarray*}
        \mathcal{T}({\hat{\rho}_0}) &=& \hat{\rho}_0 + \frac{2\beta}{m}\sum_{\alpha}\left[\hat{P}_{\alpha} \hat{\rho}_0\hat{P}_{\alpha} - \frac{1}{2}\left( \hat{P}_{\alpha} \hat{\rho}_0 + \hat{\rho}_0 \hat{P}_{\alpha} \right)\right] \\
        &=& \hat{\rho}_0 + \frac{2\beta}{m}\sum_{\alpha}\left[\hat{P}_{\alpha} \hat{\Pi}_0\hat{\rho}_0\hat{\Pi}_0\hat{P}_{\alpha} - \frac{1}{2}\left( \hat{P}_{\alpha} \hat{\Pi}_0\hat{\rho}_0 + \hat{\rho}_0 \hat{\Pi}_0 \hat{P}_{\alpha} \right)\right] = \hat{\rho}_0 \; .
    \end{eqnarray*}
    
    Secondly, for $\hat{\rho}_{\perp}$ we have
    \begin{equation}
    \begin{aligned}
        \mathcal{T}(\hat{\rho}_{\perp}) = \hat{\rho}_{\perp} + \frac{2\beta}{m}\sum_{\alpha}\left[\hat{P}_{\alpha} \hat{\rho}_{\perp}\hat{P}_{\alpha} - \frac{1}{2}\left( \hat{P}_{\alpha}\hat{\rho}_{\perp} + \hat{\rho}_{\perp}\hat{P}_{\alpha} \right)\right].
    \end{aligned}
    \end{equation}
    Since $\hat{P}_{\alpha} \hat{\Pi}_0  = \hat{\Pi}_0 \hat{P}_{\alpha} = 0$ and $\hat{\Pi}_0\hat{\rho}_{\perp} = \hat{\rho}_{\perp} \hat{\Pi}_0 =0$, we observe that $\hat{\Pi}_0  \mathcal{T}(\hat{\rho}_{\perp}) =  \mathcal{T}(\hat{\rho}_{\perp}) \hat{\Pi}_0 = 0$. This implies that $\mathcal{T}(\hat{\rho}_{\perp})$, like $\hat{\rho}_{\perp}$, has support only on the subspace orthogonal to $\hat{\Pi}_0$.
    
Thirdly, for $\hat{c}$, we have
    \begin{equation}
    \begin{aligned}
        \mathcal{T}(\hat{c}) &= \hat{c} + \frac{2\beta}{m}\sum_{\alpha}\left[\hat{P}_{\alpha} \hat{c} \hat{P}_{\alpha} - \frac{1}{2}\left( \hat{P}_{\alpha}\hat{c} + \hat{c} \hat{P}_{\alpha} \right)\right]\\
        &= \hat{c} + \frac{2\beta}{m}\sum_{\alpha}\left[\hat{P}_{\alpha} \hat{\Pi}_0 \hat{c} \hat{P}_{\alpha} - \frac{1}{2}\left( \hat{P}_{\alpha}\hat{\Pi}_0 \hat{c} + \hat{c} \hat{P}_{\alpha} \right)\right]\\
        &= \hat{c}\, (\hat{\mathds{1}}-\beta \hat{O}) \; ,
    \end{aligned}
    \end{equation}
with $\hat{\mathds{1}}$ denoting the identity. This implies that we have $\hat{\Pi}_0 \mathcal{T}(\hat{c}) = \mathcal{T}(\hat{c})$ and $\mathcal{T}(\hat{c})\, \hat{\Pi}_0 = 0$, as soon as the same are true for $\hat{c}$.

Combining the above three parts, we see that the corresponding blocks of $\hat{\rho}$ undergo independent dynamics under $\mathcal{T}(\hat{\rho})$, with $\hat{\rho}_0$ remaining constant and e.g.~$\gamma^\prime \hat{c}^\prime = \gamma \hat{c}\, (\hat{\mathds{1}}-\beta \hat{O})$. The latter implies
\begin{eqnarray*}
(\gamma^\prime)^2 \mathrm{Tr}[ \hat{c}^{\prime \dagger} \hat{\rho}_0^{(pi)} \hat{c}^{\prime}   ] & = & \gamma^2\, \mathrm{Tr}[\hat{c}^\dagger \hat{\rho}_0^{(pi)} \hat{c}\, (I-\beta \hat{O})^2 ] = \gamma^2\, \sum_i\; (\hat{c}^\dagger \hat{\rho}_0^{(pi)} \hat{c})_{(i,i)}\, (1-\beta \lambda_i)^2 \\
&\leq& \gamma^2 \, \left( \max_{i:\, (\hat{c}^\dagger \hat{\rho}_0^{(pi)}\hat{c})_{(i,i)}>0} (1-\beta \lambda_i)^2  \right) \; \sum_i\; (\hat{c}^\dagger \hat{\rho}_0^{(pi)}\hat{c})_{(i,i)} \\ &\leq& (\gamma (1-\beta G))^2\; \mathrm{Tr}[\hat{c}^\dagger \hat{\rho}_0^{(pi)} \hat{c}] \; ,
\end{eqnarray*}
as stated. Here $(\hat{X})_{(i,i)}$ denotes diagonal component $i$ of operator $\hat{X}$ in the eigenbasis of $\hat{O}$ and the first inequality follows from $\hat{\Pi}_0 \hat{c}^\dagger \hat{\rho}_0^{(pi)} \hat{c} \hat{\Pi}_0 = 0$, implying that $(\hat{c}^\dagger \hat{\rho}_0^{(pi)} \hat{c})_{(i,i)} =0$ for all $\lambda_i=0$.
\end{proof}

Lemma \ref{lemma_decomposition} basically says that the coherence between eigenspaces spanned by $\hat{\Pi}_0$ and by $(\hat{\mathds{1}}-\hat{\Pi}_0)$ decays at a rate governed by the spectral gap of $\hat{O}(\theta)$ when applying the channel $\mathcal{T}(\theta)$. Furthermore, the populations on these two eigenspaces remain constant.
Next, we prove how this decay of coherences limits the decrease in fidelity with respect to $\hat{\Pi}_0(\theta)$ upon changing $\theta$.

\mainproposition*
\begin{proof}
    From Lemma \ref{lemma_decomposition}, we have
    \begin{equation}
    \begin{aligned}
        f^{\prime} &= \mathrm{Tr}\left[\hat{\Pi}_0(\theta + \Delta \theta) \mathcal{T}(\hat{\rho}) \right] \\
        &= f\, \mathrm{Tr}\left[ \hat{\Pi}_0(\theta + \Delta \theta) \hat{\rho}_0\right] + (1-f)\, \mathrm{Tr}\left[ \hat{\Pi}_0(\theta + \Delta \theta) \hat{\rho}^{\prime}_{\perp}\right]\; + \; \gamma^\prime \, \mathrm{Tr}[\hat{\Pi}_0(\theta + \Delta \theta)\, (\hat{c}^\prime+\hat{c}^{\prime \dagger})]\\
        & \geq f\, \mathrm{Tr}\left[ \hat{\Pi}_0(\theta + \Delta \theta) \hat{\rho}_0\right]\;\; - \gamma [1-\beta \, G(\theta)]\; \left\vert \mathrm{Tr}[\hat{\Pi}_0(\theta+\Delta\theta)\,(\hat{c}^\prime + \hat{c}^{\prime \dagger})] \right\vert \\
        & \geq f \delta \;\; - \gamma [1-\beta \, G(\theta)]\; \left\vert \mathrm{Tr}\left[\hat{\Pi}_0(\theta+\Delta\theta)\,\left(\hat{\Pi}_0^{(r)}(\theta)\hat{c}^\prime(\hat{\mathds{1}}-\hat{\Pi}_0(\theta)) + (\hat{\mathds{1}}-\hat{\Pi}_0(\theta))\hat{c}^{\prime \dagger}\hat{\Pi}_0^{(r)}(\theta)\right)\right] \right\vert \; . 
    \end{aligned} 
    \end{equation}
We recall that $\Pi_0^{(r)} = \rho_0 \, \rho_0^{(pi)}$ is the projector onto the subspace of $\Pi_0$ spanned by $\rho_0$. Thanks to the Cauchy-Schwarz inequality $\vert \mathrm{Tr}[\hat{A} \hat{B}] \vert \leq \sqrt{\mathrm{Tr}[A^\dagger A] \mathrm{Tr}[B^\dagger B]}$, we can bound the first term inside the trace as follows:
\begin{equation}\label{eq:MultLineProof}
\begin{aligned}
\gamma \mathrm{Tr}& \left[\hat{\Pi}_0(\theta+\Delta\theta) \hat{\Pi}_0^{(r)}(\theta)\hat{c}^\prime(\hat{\mathds{1}}-\hat{\Pi}_0(\theta))\right]\\ &= \gamma \mathrm{Tr}\left[\hat{\Pi}_0(\theta+\Delta\theta)\, \sqrt{\rho_0} \sqrt{\rho_0^{(pi)}}\, \hat{c}^\prime(\hat{\mathds{1}}-\hat{\Pi}_0(\theta))\right] \\
&= \gamma \mathrm{Tr}\left[\left((\hat{\mathds{1}}-\hat{\Pi}_0(\theta)) \hat{\Pi}_0(\theta+\Delta\theta) \sqrt{\rho_0}\right)\left(\sqrt{\rho_0^{(pi)}} \hat{c}^\prime\right)\right] \\
&\leq \mathrm{Tr}\left[(\hat{\mathds{1}}-\hat{\Pi}_0(\theta))\; \hat{\Pi}_0(\theta+\Delta\theta)\; \rho_0\; \hat{\Pi}_0(\theta+\Delta\theta)\right]^{1/2} \, \gamma\mathrm{Tr}\left[\hat{c}^{\prime \dagger}\rho_0^{(pi)} \hat{c}^\prime\right]^{1/2} \\
&\leq \left( \sum_{i,j} p_i \,\left\vert \langle \lambda_i | \hat{\Pi}_0(\theta+\Delta\theta)| \lambda_j \rangle \right\vert^2 \right)^{1/2} \; \sqrt{f(1-f)} \; ,
\end{aligned}
\end{equation}
where $i$ spans the eigenvectors associated to the zero eigenspace of $\hat{O}(\theta)$ according to $\rho_0=\sum_i \, p_i |\lambda_i\rangle\langle \lambda_i|$ and $j$ spans the eigenvectors associated to the nonzero eigenvalues. Defining $| v_i \rangle = \hat{\Pi}_0(\theta+\Delta\theta) |\lambda_i\rangle$, we can just decompose
\begin{eqnarray*}
    |v_i \rangle = \sum_k | \lambda_k \rangle\langle \lambda_k |\hat{\Pi}_0(\theta+\Delta\theta)  | \lambda_i \rangle + \sum_j | \lambda_j \rangle\langle \lambda_j |\hat{\Pi}_0(\theta+\Delta\theta)  | \lambda_i \rangle
\end{eqnarray*}
where $i,k$ run over the 0 eigenvalues and $j$ over the nonzero eigenvalues. Then 
\begin{eqnarray*}
\langle v_i |v_i \rangle &=& \langle \lambda_i | \hat{\Pi}_0(\theta+\Delta\theta) |\lambda_i\rangle \\
&=& \sum_k |\langle \lambda_k |\hat{\Pi}_0(\theta+\Delta\theta)  | \lambda_i \rangle|^2 + \sum_j |\langle \lambda_j |\hat{\Pi}_0(\theta+\Delta\theta)  | \lambda_i \rangle|^2
\end{eqnarray*}
implies that 
\begin{eqnarray*}
\sum_{j} \,\left\vert \langle \lambda_i | \hat{\Pi}_0(\theta+\Delta\theta)| \lambda_j \rangle \right\vert^2 &=& \langle \lambda_i | \hat{\Pi}_0(\theta+\Delta\theta) |\lambda_i\rangle - \sum_k |\langle \lambda_k |\hat{\Pi}_0(\theta+\Delta\theta)  | \lambda_i \rangle|^2 \\
&\leq& \langle \lambda_i | \hat{\Pi}_0(\theta+\Delta\theta) |\lambda_i\rangle - |\langle \lambda_i | \hat{\Pi}_0(\theta+\Delta\theta) |\lambda_i\rangle|^2 \, .
\end{eqnarray*}
The function $x-x^2$ is positive, upper bounded by $1/4$ for $x\in [0,1]$ and decreasing for $x\in [1/2,1]$. Applying this to $x=\langle \lambda_i | \hat{\Pi}_0(\theta+\Delta\theta) |\lambda_i\rangle \geq \delta$ implies $\sum_{j} \,\left\vert \langle \lambda_i | \hat{\Pi}_0(\theta+\Delta\theta)| \lambda_j \rangle \right\vert^2 \leq a(\delta)$. Plugging this into the last line of \eqref{eq:MultLineProof} and summing over $i$ then yields the announced result.
\end{proof}

When $\mathcal{T}$ is applied $M$ times, the only modification in the proof is that $\gamma$ decreases to $\gamma' \leq (1-\beta G(\theta))^M$ by recursively applying Lemma \ref{lemma_decomposition}. This readily justifies the extension \eqref{eq:fidelity_drop2}.

\subsection{Convergence of Zeno dragging and the advantage of weak continuous limit}\label{appendix:convergence of zeno}

To ensure a limited drop in fidelity when dragging between $\theta=\theta_i$ and $\theta=\theta_f$, on the basis of \eqref{eq:fidelity_drop2}, we impose sufficiently small steps $\Delta\theta$ to maintain $\delta$ close to 1, and take $M$ sufficiently large to maintain $f' \sim f \delta $ at each step.

\maintheorem*
\begin{proof}
There are $N$ increments of $\theta$ in the algorithm, so we need to iterate \eqref{eq:fidelity_drop2} N times. Drop the factor of $f(1-f)$ on the right hand side of \eqref{eq:fidelity_drop2} (which is valid since $f(1-f)\leq 1$), assume $\delta \geq 1/2$ and assume $M(\theta)$ is chosen such that $[1-\beta G(\theta)]^{M(\theta)} \leq \alpha < 1$. Then \eqref{eq:fidelity_drop2} readily yields
\begin{eqnarray}
f_{k+1} + \alpha\, m(\delta) &\geq& \delta \, \left( f_k - 2 \alpha \sqrt{\delta-\delta^2}/\delta\right) + \alpha\, m(\delta) \\
&\geq & \delta (f_k + \alpha \, m(\delta)) \; ,
\end{eqnarray}
provided we choose $m(\delta) \geq 2 \frac{\sqrt{\delta-\delta^2}}{1-\delta}$; for instance, take $m(\delta) = \sqrt{2}\sqrt{\frac{1+\delta}{1-\delta}}$. This recursively yields
\begin{equation}
f_N \geq \delta^N f_0 - \sqrt{2}\, \alpha\,(1-\delta^N)\sqrt{\frac{1+\delta}{1-\delta}} \,  .
\end{equation}
Starting with $f_0 = 1$, we now impose parameters to ensure $\delta^N \geq 1-\epsilon_1$ and $\sqrt{2}\, \alpha\,(1-\delta^N)\sqrt{\frac{1+\delta}{1-\delta}} < \epsilon_2$ to conclude the proof.

Concerning the first requirement, the assumption translates into noting that $\delta \geq \left(\mathrm{cos}(\frac{\Delta \theta}{2})\right)^{2n} \geq 1-\frac{n\Delta \theta^2}{4}$, we then obtain 
$$\delta^N \geq \left(\mathrm{cos}(\frac{\Delta \theta}{2})\right)^{2nN} \geq 1 - \frac{N n\Delta \theta^2}{4} = 1 - \frac{n \phi^2}{4N}\geq 1-\epsilon_1$$ 
by choosing $N$ as stated.

Concerning the second requirement, we have
\begin{equation}
\begin{aligned}
    \sqrt{2}\alpha (1-\delta^N)\sqrt{\frac{1+\delta}{1-\delta}} &= \sqrt{2}\alpha \frac{(1-\delta^N)}{1-\delta}\sqrt{1-\delta^2} \\
    &= \sqrt{2}\alpha \sqrt{1-\delta^2} \sum_{k=0}^{N-1}\delta^k \\
    &\leq \sqrt{2}\alpha \sqrt{1 -(1-\tfrac{2n\phi^2}{4 N^2})} \sum_{k=0}^{N-1} 1\\
    &= \sqrt{2}\alpha \sqrt{\frac{n}{2}}\frac{\phi}{N}N \; = \; \alpha\sqrt{n} \phi \; .
\end{aligned}
\end{equation}
From there we readily see that taking $M(\theta)$ as stated allows us to choose $\alpha < \frac{\epsilon_2}{\sqrt{n}\phi}$ and thus to satisfy this second requirement.
\end{proof} 

The above theorem still contains a hidden dependence on time, inside the measurement strength $\beta$. The total time-to-solution is computed as follows.

\maincorollary*
\begin{proof}
    The total time is just the number $N = \frac{n\phi^2}{4\epsilon_1}$ of $\theta$ increment steps, times the number $M$ of repetitions of the channel $\mathcal{T}(\theta)$ at each $\theta$ value --- which we lower bound from the Theorem statement, replacing $G(\theta)$ by its lower bound $G_{\min}$ and $\beta=1-e^{-\Delta t/2\tau}$ --- , times the duration $\Delta t$ of a single application of $\mathcal{T}(\theta)$. 
\end{proof}

\noindent \emph{Remark:} Let us denote $\mathcal{T}_{\Delta t}$ the Kraus map associated to a particular choice of $\Delta t$. The above Corollary thus considers the scaling as a function of $w$ of applying, for every value of $\theta$, the Kraus map $(\mathcal{T}_{1/w})^w$. When measuring a single clause, it is easy to see from \eqref{eq:measurement_channel} that $(\mathcal{T}_{\Delta t})^2 \equiv \mathcal{T}_{2 \Delta t}$. Physically, this is a consequence of the commutation of consecutive weak measurements of the same observable; it implies that the total measurement accuracy versus total measurement time does not depend on the chosen value of $\Delta t$ in this case. Once we measure several non-commuting clauses, we have $(\mathcal{T}_{\Delta t})^2 \not\equiv \mathcal{T}_{2 \Delta t}$ in \eqref{eq:measurement_channel}. Physically, the advantage of $\Delta t \rightarrow 0$ can be understood as saying that the exact unraveling of the measurement process over time becomes important, once we claim to weakly simultaneously measure non-commuting observables.

\section{Analytically solvable optimal control: a single-qubit example }\label{appendix:analytical_single_qubit}
Here we study an analytically solvable model consisting of only a single qubit. We consider the system driven either by a Lindblad master equation for Lindblad-OFS problem
\begin{equation}
    \frac{d\hat{\rho}}{dt} = \Gamma \left( \hat{\sigma}(\theta) \, \hat{\rho} \, \hat{\sigma}(\theta) - \hat{\rho}\right),
\end{equation} 
or by a stochastic master equation for MLP-OFS problem
\begin{equation}
    \frac{d\hat{\rho}}{dt} = \sqrt{\Gamma}r\left[\hat{\sigma}(\theta)\hat{\rho} + \hat{\rho}\hat{\sigma}(\theta) - 2 \hat{\rho}\mathrm{Tr}(\hat{\sigma}(\theta)\hat{\rho}) \right],
\end{equation}
where $rdt = 2\sqrt{\Gamma}\mathrm{Tr}(\hat{\sigma}(\theta)\hat{\rho})dt + dW$ is the measurement readout and $dW$ is the white noise. Here $\Gamma$ is the measurement rate and the measurement observable $\hat{\sigma}(\theta)$ is the generalized Pauli operator at the basis determined by $\theta$
\begin{equation}
    \hat{\sigma}(\theta) = \mathrm{cos}(\theta)\hat{\sigma}_x + \mathrm{sin}(\theta) \hat{\sigma}_z.
\end{equation}
In both Lindblad-OFS and MLP-OFS problems, the initial state is $\hat{\rho}(0) = \frac{\hat{\openone} + \hat{\sigma}(\phi_i)}{2}$, and we want to drag the qubit to $\frac{\hat{\openone} + \hat{\sigma}(\phi_f)}{2}$ within the given dragging time $T_f$, where $\phi_i$ and $\phi_f$ are initial and target azimuthal angles of the qubit in the $xz$-plane of the Bloch sphere. We hence define a cost function 
\begin{equation}
    J_f = \mathrm{Tr}(\hat{\rho}(T_f) \hat{\sigma}(\phi_f))
\end{equation}
which only depends on the final state. The task is to find the optimal schedule $\theta(t)$ for $t \in [0, T_f]$ such that $J_f$ is maximized. 

The solution to MLP-OFS is derived in \cite{lewalle2024prxq}, and the optimal schedule $\theta^\star(t)$ is a linear function of time if the state is initially pure:
\begin{eqnarray}
    \theta^\star(t) = \phi_i + (\phi_f - \phi_i)\frac{t}{T_f} + \arctan\left(\frac{\phi_f - \phi_i}{4\Gamma T_f}\right).
\end{eqnarray}
Notice there is an offset $\arctan\left(\frac{\phi_f - \phi_i}{4\Gamma T_f}\right)$ between the control $\theta^\star(t)$ and the state's (supposed) azimuthal coordinate $\phi$. 

We now derive the solution to the Lindblad-OFS problem and show the optimal schedule $\theta^\star(t)$ is also a linear function of time, though with a different offset. Just like Lindblad-OFS, this problem can also be solved exactly within the framework of Pontryagin maximization principle (PMP), which we illustrate here. PMP defines a control Hamiltonian $\mathbb{H}_c$ which in this case can be calculated to be
\begin{equation}
\begin{aligned}
    \mathbb{H}_c &= \Gamma \Lambda_x\left[ R_x (\mathrm{cos}(2\theta)-1) + R_z \,\mathrm{sin}(2\theta)\right] \\
    &+\Gamma \Lambda_z[R_x \mathrm{sin}(2\theta) - R_z (\mathrm{cos}(2\theta)+1)],
\end{aligned}
\end{equation}
where we have parameterized $\hat{\rho} = \frac{\hat{\mathds{1}} + R_x \hat{\sigma}_x+R_z \hat{\sigma}_z}{2}$, and $\Lambda_x$ and $\Lambda_z$ are the conjugate variables with respect to the state variables $R_x$ and $R_z$. Using the canonical transformation 
\begin{equation}
    \begin{aligned}
R_x & \rightarrow R \cos \phi \\R_z & \rightarrow R \sin \phi \\
\Lambda_x & \rightarrow \Lambda_R \cos \phi-\Lambda_\phi \sin \phi / R \\
\Lambda_z & \rightarrow \Lambda_R \sin \phi+\Lambda_\phi \cos \phi / R
\end{aligned}
\end{equation}
to move from Cartesian to polar coordinates, we can rewrite the control Hamiltonian in the polar coordinate
\begin{equation}\label{eq:H_c_polar}
    \mathbb{H}_c = \Gamma \left\{R \Lambda_R\left[\cos(2(\theta - \phi)) - 1\right] + \Lambda_\phi \sin(2(\theta - \phi)) \right\}.
\end{equation}
The PMP states that the state variable $\mathbf{q}$ (for $ \mathbf{q}= R $ or $\phi$) satisfies
\begin{equation}\label{eq:Hamilton_eq_state}
    \dot{\mathbf{q}} = \frac{\partial \mathbb{H}_c}{\partial \Lambda_\mathbf{q}},
\end{equation}
and the conjugate variable $\Lambda_\mathbf{q}$ satisfies
\begin{equation}\label{eq:Hamilton_eq_costate}
     \dot{\Lambda}_\mathbf{q} = -\frac{\partial \mathbb{H}_c}{\partial \mathbf{q}}.
\end{equation}
Moreover, PMP further states that $\mathbb{H}_c$ is maximized over all possible values of $\theta$ at any time point $t$. For unbounded $\theta$, this is equivalent to 
\begin{equation}
    \frac{\partial \mathbb{H}_c}{\partial\theta} = 0,
\end{equation}
which evaluates to be 
\begin{equation}\label{eq:offset}
    \theta = \phi + \frac{1}{2}\tan^{-1}\left(\frac{\Lambda_\phi}{R\Lambda_R}\right).
\end{equation}
Notice that as $\mathbb{H}_c$ is only a function of $\theta - \phi$, $\frac{\partial \mathbb{H}_c}{\partial\theta} = -\frac{\partial \mathbb{H}_c}{\partial\phi} $ so $\frac{\partial \mathbb{H}_c}{\partial\phi} = 0$ as well. This means $\Lambda_\phi$ is a constant of motion. Also, the form of $\mathbb{H}_c$ in \eqref{eq:H_c_polar} gives rise to 
\begin{equation}
\begin{aligned}
    \frac{d(R\Lambda_R)}{dt} = \dot{R}\Lambda_R + R \dot{\Lambda}_R = \Lambda_R \frac{\partial \mathbb{H}_c}{\partial \Lambda_R} - R \frac{\partial \mathbb{H}_c}{\partial R} = 0,
\end{aligned}
\end{equation}
which means $R\Lambda_R$ is also a constant of motion. Therefore, the offset $\theta - \phi = \frac{1}{2}\tan^{-1}\left(\frac{\Lambda_\phi}{R\Lambda_R}\right)$  between the control $\theta$ and the state variable $\phi$ is a constant. 

The equation of motion for state variable $\phi$ gives the rate of change for $\phi$ 
\begin{equation}
    \dot{\phi} = \frac{\partial \mathbb{H}_c}{\partial \Lambda_\phi} = \Gamma \sin(2(\theta - \phi)),
\end{equation}
which is a constant. This means $\phi$ is a linear function of time, and so is the control $\theta$ due to \eqref{eq:offset}
\begin{equation}
    \phi(t) = \phi_i + (\phi(T_f) - \phi_i) \frac{t}{T_f},
\end{equation}
and also we can rewrite $\dot{\phi}$ as 
\begin{equation}\label{eq:rewrite_rate_1}
    \dot{\phi} = \Gamma \sin(2(\theta - \phi)) = \frac{\Gamma\Lambda_\phi}{\sqrt{R^2\Lambda_R^2 + \Lambda_\phi^2}} = \frac{\phi(T_f) - \phi_i}{T_f}.
\end{equation}
The PMP also provides the terminal condition for the conjugate variables
\begin{equation}
\begin{aligned}
    \left.\Lambda_\phi(T_f) = - \frac{\partial J_f}{\partial \phi}\right|_{T_f} = - R(T_f)\sin(\phi_f - \phi(T_f)) \\
    \left.\Lambda_R(T_f) = - \frac{\partial J_f}{\partial R}\right|_{T_f} = - \cos(\phi_f - \phi(T_f)),
\end{aligned}
\end{equation}
which leads to $\cot(\phi_f-\phi(T_f)) = \frac{R \Lambda_R}{\Lambda_\phi} $. Here we used $J_f = R(T_f) \cos(\phi_f - \phi(T_f))$.  This means we can further rewrite \eqref{eq:rewrite_rate_1} as 
\begin{equation}\label{eq:algebraic_eq}
    \sin(\phi_f - \phi(T_f)) = \frac{\phi(T_f) - \phi_i}{\Gamma T_f}.
\end{equation}
Solving $\phi(T_f)$ from this with given $T_f$, the optimal schedule then is determined to be 
\begin{equation}
    \theta(t) = \phi_i + (\phi(T_f) - \phi_i)\frac{t}{T_f} + \frac{1}{2}(\phi_f - \phi(T_f))
\end{equation}
With the above results, we can also solve for $r(t)$ given optimal schedule from 
\begin{equation}
    \dot{R} = \frac{\partial \mathbb{H}_c}{\partial \Lambda_R} = \left(\frac{R \Lambda_R}{\sqrt{R^2 \Lambda_R^2 + \Lambda_\phi^2}} -1 \right)\Gamma R = -[1-\cos(\phi_f -\phi(T_f))]\Gamma R,
\end{equation}
from which we have $R(T_f) = e ^{-[1-\cos(\phi_f -\phi(T_f))]\Gamma T_f}$. Therefore the optimal final cost under the optimal schedule is 
\begin{equation}
    J_f^{*} = e ^{-[1-\cos(\phi_f -\phi(T_f))]\Gamma T_f} \cos(\phi_f -\phi(T_f)),
\end{equation}
where $\phi(T_f)$ is determined from \eqref{eq:algebraic_eq}, and it can be shown this is a monotonic increasing function of $T_f$, consistent with adiabatic intuition.

\section{Equation reduction steps}\label{appendix:Equation_reduction}
In Section \ref{sec_OCT} of the main text, we derived the equation of motion \eqref{eq:eom_costate_mlp_opt_final} that the costate $\hat{\Lambda}$ has to satisfy under the optimality condition. In this appendix, we show that the equation can be further simplified into the negative conjugate of \eqref{eq:eom_rho_mlp_opt_final}. Consider the SME
\begin{equation}
    \frac{\partial \hat{\rho}}{\partial t}=\hat{F}\hat{\rho}+\hat{\rho}\hat{F},
\end{equation}
with 
\begin{equation}
    \hat{F} = \frac{1}{m\tau}\sum_\alpha(r_\alpha-1)\left(\hat{P}_\alpha-\left\langle\hat{P}_\alpha\right\rangle\right),
\end{equation}
and 
\begin{equation}
     f = \frac{1}{m\tau}\sum_\alpha(r_\alpha-1)\left\langle\hat{P}_\alpha\right\rangle.
\end{equation}
Then we can write the costate evolution as 
\begin{equation}
    \frac{\partial \hat{\Lambda}}{\partial t}= -\hat{F}\hat{\Lambda}-\hat{\Lambda}\hat{F}+2(\left\langle\hat{\Lambda}\right\rangle-1)\left(\hat{F}+f\right).
\end{equation}
It is easy to see that
\begin{equation}
     \frac{\partial \left\langle\hat{\Lambda}\right\rangle}{\partial t}=(2\left\langle\hat{\Lambda}\right\rangle-1)f.
\end{equation}
We now redefine $\hat{\Lambda}^\prime=\hat{\Lambda}+\left(1-\expval{\hat{\Lambda}}\right)\hat{\mathds{1}}$. Then, it can be seen that 
\begin{equation}
    \frac{\partial \hat{\Lambda}^\prime}{\partial t}= -\hat{F}\hat{\Lambda}^\prime-\hat{\Lambda}^\prime\hat{F}.
\end{equation}
We can now  redefine our costate to be $\hat{\Lambda}^\prime.$ We can similarly change the transversality condition. Note, this transformation does not change the CDJ stochastic Hamiltonian.

\section{Connection between costate operator and CDJ variables}\label{appendix:costate}

In this section, we analyze how the costate operator  $\hat{\Lambda}$, defined in sec.~\ref{OC_costate}, is connected to the costate vector $\{\mathbf{\Lambda}\}$. For notational convenience, we denote the latter as $\{\mathbf{p}\}$ in this section. Recall, for a continuously monitored system, the CDJ stochastic path integral expresses the probability densities of a specific trajectory $\{\mathbf{q},\mathbf{r}\}$ and is given by \cite{CJ} \begin{equation}
     \mathcal{P}=\int\mathcal{D}[\mathbf{p}]e^\mathcal{S},
\end{equation}
where $\mathbf{q}$ provides a parametrization of the system state (e.g.~Bloch coordinates for a qubit, position and momentum expectation values for an oscillator) and $\mathbf{p}$ are the corresponding costate variables.
Now, we will see how the costate operator $\hat{\Lambda}$ connects to such costate variables for a single qubit. For a single qubit
\begin{eqnarray}
    \hat{\rho}=\frac{1}{2}\left(\hat{\mathds{1}}+x\hat{\sigma}_x+y\hat{\sigma}_y+z\hat{\sigma}_z\right),
\end{eqnarray}
where $(x,y, z)\equiv (q_1, q_2, q_3)$ are the Bloch coordinates of the qubit. Any costate operator $\hat{\Lambda}$ can be expressed in terms of the Pauli operators as below
\begin{eqnarray}
    \hat{\Lambda}=\hat{\mathds{1}}+\Lambda_x\left(\hat{\sigma}_x-x\hat{\mathds{1}}\right)+\Lambda_y\left(\hat{\sigma}_y-y\hat{\mathds{1}}\right)+\Lambda_z\left(\hat{\sigma}_z-z\hat{\mathds{1}}\right).
\end{eqnarray}
In the above expression, we used the frame transformation $\hat{\Lambda}\equiv \hat{\Lambda}+\left(1-\expval{\hat{\Lambda}}\right)\hat{\mathds{1}}$ and $\{\Lambda_{x},\Lambda_{y},\Lambda_{z}\}$ are scalars.
If the qubit evolves according to $\dot{\hat{\rho}}=\mathcal{F}[\hat{\rho}]$, we can find the Bloch coordinate evolution by $\dot{q}_j=f_j=\textrm{tr}\left(\,\hat{\sigma}_j\mathcal{F}\right)$. Now, consider PMP based optimization of the cost
\begin{eqnarray}
    J = \int_0^{t_f}dt\,f^0(\hat{\rho},\chi),
\end{eqnarray}
with control $\chi$. Here, $f^0$ is an arbitrary function. Then, the Pontryagin Hamiltonian becomes
\begin{eqnarray}
    \mathcal{H}=-f^0+\sum_jp_jf_j.
\end{eqnarray}
However, note that $\textrm{tr}\,\left(\hat{\Lambda}\mathcal{F}\right)=\sum_j\Lambda_jf_j$. Thus, we can also write 
\begin{eqnarray}
    \mathcal{H}=-f^0+\textrm{tr}\left(\hat{\Lambda}\mathcal{F}\right),
\end{eqnarray}
with $\Lambda_j = p_j$. Hence, the CDJ costate variables can be interpreted as a parametrization of the costate operator.  For a general quantum system if $q_j = \textrm{tr}\left(\hat{X}_j\hat{\rho}\right)$, then, we can express, $\hat{\Lambda}=\sum_jp_j\hat{X}_j$. Thus, the costate operator constrains the different degrees of freedom in the system.  On the other hand, if $\hat{\Lambda}=\sum_j\lambda_j\hat{\Pi}_j$ is an eigen decomposition, then $\{\hat{\Pi}_j\}$ represent effective degrees of freedom of motion.

To interpret the path integral of the form  $\int\mathcal{D}[\hat{\Lambda}]$, we can parametrize both $\hat{\rho}$ and $\hat{\Lambda}$, in terms of the state coordinates and corresponding costate variables and perform path integrals as usual \cite{CDJ_origin,CJ}.

\clearpage
\newpage
\bibliographystyle{quantum}
\bibliography{main.bib}
\end{document}